\newtheorem{theorem}{Theorem}
\newtheorem{lemma}{Lemma}
\begin{document}

\title{Optimal Foresighted Multi-User Wireless Video}

\author{Yuanzhang~Xiao and~Mihaela~van~der~Schaar,~\IEEEmembership{Fellow,~IEEE} \\
Department of Electrical Engineering, UCLA. \{yxiao,mihaela\}@ee.ucla.edu.}

\maketitle

\begin{abstract}
Recent years have seen an explosion in wireless video communication systems. Optimization in such systems is crucial -- but most existing methods
intended to optimize the performance of multi-user wireless video transmission are inefficient. Some works (e.g. Network Utility Maximization (NUM))
are \emph{myopic}: they choose actions to maximize \emph{instantaneous} video quality while ignoring the future impact of these actions. Such myopic
solutions are known to be inferior to \emph{foresighted} solutions that optimize the long-term video quality. Alternatively, foresighted solutions
such as rate-distortion optimized packet scheduling focus on single-user wireless video transmission, while ignoring the resource allocation among
the users.

In this paper, we propose an optimal solution for performing joint \emph{foresighted} resource allocation and packet scheduling among \emph{multiple}
users transmitting video over a shared wireless network. A key challenge in developing foresighted solutions for multiple video users is that the
users' decisions are coupled. To decouple the users' decisions, we adopt a novel dual decomposition approach, which differs from the conventional
optimization solutions such as NUM, and determines foresighted policies. Specifically, we propose an informationally-decentralized algorithm in which
the network manager updates resource ``prices'' (i.e. the dual variables associated with the resource constraints), and the users make individual
video packet scheduling decisions based on these prices. Because a priori knowledge of the system dynamics is almost never available at run-time, the
proposed solution can learn online, concurrently with performing the foresighted optimization. Simulation results show 7 dB and 3 dB improvements in
Peak Signal-to-Noise Ratio (PSNR) over myopic solutions and existing foresighted solutions, respectively.
\end{abstract}

\vspace{-0.1cm}
\begin{IEEEkeywords}
Wireless video, multi-user wireless communication, resource allocation, video packet scheduling, foresighted optimization, network utility
maximization, Markov decision processes, reinforcement learning
\end{IEEEkeywords}

\section{Introduction}
Video applications, such as video streaming, video conferencing, remote teaching, surveillance etc., have become the major applications deployed over
the current cellular networks and wireless Local Area Networks (LANs). The bandwidth-intensive and delay-sensitive video applications require
efficient allocation of network resources (e.g. bandwidth in 4G LTE networks or temporal transmission opportunity in wireless LANs) among the users
accessing the network, and efficient scheduling of each user's video packets based on its allocated resources.


Most existing solutions for multi-user wireless video transmission are \emph{myopic} \cite{vanderSchaarAndreopoulosHu}--\cite{JiHuangChiang}, meaning
that these joint resource allocation and packet scheduling solutions are designed to maximize the \emph{instantaneous}\footnote{We will define
instantaneous and long-term video quality rigorously in Section~\ref{sec:model}.} video quality (i.e. the average distortion impact of the packets
sent within the next transmission opportunity or time interval). These solutions can be interpreted as maximizing the instantaneous video quality by
repeatedly and independently solving over time Network Utility Maximization (NUM) problems. However, current (resource allocation and packet
scheduling) decisions impact the future system performance, which is not taken into consideration by the repeated NUM solutions. Optimal solutions
for multi-user resource allocation and packet scheduling need to be formalized as sequential decisions given (unknown) dynamics (i.e. time-varying
channel conditions and video traffic characteristics, dependencies across video packets etc.). Hence, the repeated NUM solutions are myopic and
inferior to \emph{foresighted} solutions that maximize the \emph{long-term} video quality.

To address this limitation, several foresighted solutions have been proposed for packet scheduling, see e.g. \cite{ChouMiao}--\cite{FuVDS_CSVT2012}.
However, these packet scheduling solutions focus on the sequential decision making of a single foresighted video user and do not consider the
coupling among users. In multi-user wireless networks, the solutions developed for a single user have been shown to be highly inefficient
\cite{FuVDS_JSAC2010}\footnote{The work \cite{FuVDS_JSAC2010} is the only one that develops foresighted solutions for multiple video users. We will
discuss the differences between our work and \cite{FuVDS_JSAC2010} in detail in Section~\ref{sec:related}.}, since they ignore that users are sharing
the same wireless resource and hence, their decisions over time are coupled. A simple solution to perform multi-user resource allocation was proposed
in \cite{vanderSchaarAndreopoulosHu}. However, this allocation is performed statically, possibly before run-time, and as shown in
\cite{FuVDS_JSAC2010}, such static resource allocation is suboptimal compared to solutions that dynamically allocate resources among multiple users
given the users' video traffic and channel characteristics which are time-varying and often unknown before run-time.

In this paper, we propose an optimal solution for performing joint \emph{foresighted} resource allocation and packet scheduling among multiple users
transmitting video over a wireless network. We consider the transmission of video by multiple users over a 4G cellular uplink network. In such
networks, the base station (BS) needs to decide how to allocate wireless resources (i.e. bandwidth) among multiple video users, each performing video
packet scheduling over the wireless network given the allocated resources. In our proposed solution, the BS does not directly allocate the resources;
instead, it mediates the resource allocation by charging each user a unit resource ``price''\footnote{Note that the ``price'' is a control signal,
rather than the price for real monetary payment.}. Given the price, each user determines its own optimal packet scheduling and resource acquisition.
Hence, our approach is decentralized and enables users to make optimal decisions locally, in an informationally-decentralized manner, based on their
own private information and the resource price. We propose a low-complexity algorithm in which the BS updates the resource prices using a stochastic
subgradient method based on the users' resource usage while the users make foresighted decisions based on these prices. We prove that the algorithm
can converge to the optimal prices, under which the users' optimal decisions maximize the long-term network utility. Moreover, our solution also
allows to impose a minimum video quality guarantee for each user (i.e. the video quality for each user needs to be higher than a preset minimum video
quality guarantee). Importantly, the BS and the users may not have in practice the statistic knowledge of the system dynamics (e.g. the incoming
video traffic characteristics and the time-varying channel conditions). For such scenarios, we propose a post-decision state (PDS) based learning
algorithm that converges to the proposed optimal solution. Finally, a desirable feature of the proposed solution is that the efficiency of the
proposed resource allocation scheme does not heavily depend on the adopted packet scheduling algorithm. In other words, the proposed resource
allocation can be deployed in conjunction with other packet scheduling algorithms (such as earliest-deadline-first scheduling) that are simpler but
suboptimal compared to the proposed scheduling algorithm. As we will show in our simulations section, our solution outperforms existing resource
allocation schemes even when users deploy such simple packet scheduling algorithms. Hence, it can be easily deployed by the network managers (e.g.
base stations) in conjunction with existing users' scheduling technology and applications. This allows for easy and gradual adoption of our proposed
methods.

Our proposed framework can be extended for deployment in other scenarios. First, although we focus on uplink video transmission, our work can be
easily extended to downlink video transmission, to IEEE 802.11a/e wireless LANs (Local Area Networks) in which transmission opportunities are
allocated among users \cite{vanderSchaarAndreopoulosHu}, and to IEEE 802.15.4e-enabled Internet-of-Things and machine-to-machine communications.
Second, although in this paper we focus on the users' decisions on how to schedule their packets given a pre-encoded video bitstream, the framework
can be extended to the case where the users' decisions are how to make real-time encoding decisions such as determining the Group of Pictures
structure and quantization parameters in response to resource prices.

The rest of the paper is organized as follows. We discuss prior work in Section~\ref{sec:related}. We describe the system model in
Section~\ref{sec:model} and formulate the design problem in Section~\ref{sec:Problem}. Then we propose our solution in
Section~\ref{sec:DesignFramework}. Simulation results in Section~\ref{sec:Simulation} demonstrate the performance improvement of the proposed
solution. Finally, Section~\ref{sec:Conclusion} concludes the paper.

\section{Related Works}\label{sec:related}

\subsection{Related Works on Video Transmission}
\begin{table}
\renewcommand{\arraystretch}{1.0}
\caption{Comparisons With Related Works.} \label{table:RelatedWork} \centering
\begin{tabular}{|c|c|c|c|c|c|}
\hline
 & Traffic model & Users & Foresighted & Learning & Optimal \\
\hline
\cite{ReibmanBerger}--\cite{YuHsiuPang_TMC2012} & Flow-level & Multiple & No & No & No \\
\hline
\cite{vanderSchaarAndreopoulosHu}--\cite{JiHuangChiang} & Packet-level & Multiple & No & No & No \\
\hline
\cite{ChouMiao}--\cite{FuVDS_CSVT2012} & Packet-level & Single & Yes & Yes & No \\
\hline
\cite{FuVDS_JSAC2010} & Packet-level & Multiple & Yes & Yes & No \\
\hline
Proposed & Packet-level & Multiple & Yes & Yes & Yes \\
\hline
\end{tabular}
\end{table}

The existing works on wireless video communications can be classified based on various criteria. First, some works
\cite{ReibmanBerger}--\cite{YuHsiuPang_TMC2012} model the video traffic with a simplified model, e.g. flow-level models using priority queues. Such a
model cannot accurately capture the heterogeneous distortion impact, delay deadlines, and dependency of the video packets. Hence, the solution
derived based on simplified models may not perform well in practice \cite{FuVDS_JSAC2010}\cite{FuVDS_CSVT2012}.

A plethora of recent works \cite{vanderSchaarAndreopoulosHu}--\cite{FuVDS_CSVT2012} adopt packet-level models for video traffic but they adopt
distinct solutions to optimize the video quality. Some works \cite{vanderSchaarAndreopoulosHu}--\cite{JiHuangChiang} assume that the users are
\emph{myopic}, namely they only maximize their \emph{instantaneous} video quality over a given time interval without considering the impact of their
actions on the \emph{long-term} video quality. They cast the problem in a NUM framework to maximize the instantaneous joint video quality of all the
users, and apply the NUM framework repeatedly when the channel conditions or video traffic characteristics change. However, since the users are
optimizing their transmission decisions myopically, their long-term average performance is inferior to the performance achieved when the users are
\emph{foresighted} \cite{FuVDS_CSVT2012}. Most of the works \cite{ChouMiao}--\cite{FuVDS_CSVT2012} considering the foresighted decision of users
focus solely on a \emph{single} foresighted user making sequential transmission decisions (e.g. packet scheduling, retransmissions etc.). Among them,
some \cite{ChouMiao}--\cite{WangOrtega} consider the dynamics of video traffic only, while others \cite{FuVDS_CSVT2012} consider the dynamics of both
video traffic and channel conditions. However, these single user solutions do not discuss how to allocate resources among multiple users as well as
how this allocation is impacted by and impacts the foresighted scheduling decisions of individual users. Static allocations of resources which are
often assumed in the works studying the foresighted decisions of single users for wireless transmission have been shown to be suboptimal compared to
the solutions that dynamically allocate resources among multiple users \cite{FuVDS_JSAC2010}.

The only work which proposes a solution for video resource allocation among multiple foresighted users is \cite{FuVDS_JSAC2010} . Since this work
\cite{FuVDS_JSAC2010} is most related to our proposed solution, we discuss the differences between them in detail. The challenge in foresighted
multi-user video transmission is that the users' decisions are dynamic and coupled through the resource (e.g. bandwidth or time) constraints. Hence,
the design problem is much more complicated than in myopic multi-user video resource allocation and transmission (e.g. NUM) or in foresighted
single-user video transmission given a static or pre-determined resource allocation. To overcome this challenge, a dual decomposition method has been
proposed in \cite{FuVDS_JSAC2010}, which removes the resource constraints in all the states and adds them to the objective functions in the
corresponding states as penalties modulated by the \emph{same} Lagrangian multiplier (interpreted as the price of resources). There are two key
differences between the dual decomposition method in \cite{FuVDS_JSAC2010} and the classical dual decomposition method used in myopic optimization
such as in NUM. The first difference is that the users' objective functions are the long-term video quality, and hence the decomposed subproblems
solved by each user are foresighted optimization problems instead of static optimization problems as in NUM. The second one is that the update of the
dual variables is different from that in NUM since the subproblems are foresighted. However, such a solution is in general suboptimal (i.e. there is
a positive duality gap) for the foresighted multi-user optimization for the following reason. Since it uses the same Lagrangian multiplier (i.e. a
uniform price) for the resource constraints in all the states \cite{FuVDS_JSAC2010}, the uniform price is usually set high such that the resource
constraints in all the states are satisfied. Technically, when there are some active constraints (i.e. constraints satisfied with strict inequality),
the complementary slackness conditions cannot be satisfied under all the states. Hence, there will always be a duality gap due to the violation of
complementary slackness conditions. In contrast, in our work, we use different prices for resource constraints in different states (i.e. different
channel conditions). In this way, we can achieve the optimal performance (i.e. there is no duality gap, unlike the MU-MDP solution with uniform price
in \cite{FuVDS_JSAC2010}).

Table~\ref{table:RelatedWork} summarizes the above discussions. Note that the optimality shown in the last column of Table~\ref{table:RelatedWork}
indicates whether the solution is optimal for the long-term network utility (i.e. the joint long-term video quality of all the users in the network).
The existing solutions can be optimal in the corresponding models (e.g. the solutions based on repeated NUM
\cite{vanderSchaarAndreopoulosHu}--\cite{JiHuangChiang} are optimal when the users are myopic).


\begin{table}
\renewcommand{\arraystretch}{1.0}
\caption{Comparisons With Related Theoretical Frameworks.} \label{table:RelatedWork_MathematicalFramework} \centering
\begin{tabular}{|c|c|c|c|c|c|}
\hline
\multirow{2}{*}{} & Decision & Decentralized & \multirow{2}{*}{Foresighted} & \multirow{2}{*}{Learning} & \multirow{2}{*}{Optimal} \\
                  & makers   & information   &                              &                           & \\
\hline
MDP \cite{FuVDS_CSVT2012} & Single & N/A & Yes & Yes & No \\
\hline
Repeated NUM & \multirow{2}{*}{Multiple} & \multirow{2}{*}{Yes} & \multirow{2}{*}{No} & \multirow{2}{*}{No} & \multirow{2}{*}{No} \\
\cite{JiHuangChiang} & & & & & \\
\hline
MU-MDP \cite{FuVDS_JSAC2010} & Multiple & Yes & Yes & Yes & No \\
\hline
Lyapunov & & & & & \\
Optimization & \multirow{2}{*}{Single} & \multirow{2}{*}{N/A} & \multirow{2}{*}{Yes} & \multirow{2}{*}{Yes} & \multirow{2}{*}{No} \\
\cite{Neely} & & & & & \\
\hline
Proposed & Multiple & Yes & Yes & Yes & Yes \\
\hline
\end{tabular}
\end{table}

\subsection{Related Theoretical Frameworks}
Single-user foresighted decision making in a dynamically changing environment has been studied and formulated as Markov Decision Process (MDP). As
mentioned previously, the problem of multiple video users making coupled foresighted decisions (MU-MDPs) has been studied in \cite{FuVDS_JSAC2010}.
However, as we discussed before, this MU-MDP solution is based on uniform prices and is therefore suboptimal for most multi-user wireless video
scenarios, even though it has been shown to significantly outperform the myopic solutions.

It is worth noting that foresighted decision making in a dynamically changing environment can also be solved using a Lyapunov optimization framework
\cite{Neely}. However, the Lyapunov optimization framework is not able to take optimal decisions for wireless video streaming since it disregards the
specific delay-constraints and distortion-impacts of the video traffic \cite{FuvanderSchaar_TVT2012}.

Table~\ref{table:RelatedWork_MathematicalFramework} summarizes the above discussions about existing theoretical frameworks.

In Table~\ref{table:DetailedComparison_Frameworks} at the end of Section~\ref{sec:DesignFramework}, we provide rigorous technical comparisons with
existing works after we describe our proposed solution in detail.


\section{System Model}\label{sec:model}
We first present a general abstract model for multi-user wireless video transmission in 4G LTE cellular networks or IEEE 802.11a/e wireless LANs.
Then we give an example of a packet-level video transmission model as in \cite{ChouMiao}\cite{FuVDS_CSVT2012}\cite{FuVDS_JSAC2010} to illustrate our
abstract model.

\subsection{The General Model}
We consider a network with a network manager (e.g. the base station), indexed by $0$, and a set $\mathcal{I}$ of $I$ wireless video users, indexed by
$i=1,\ldots,I$. Time is slotted at $t=0,1,2,\ldots$. In the rest of the paper, we will put the user index in the superscript and the time index in
the subscript of variables. The multi-user wireless video transmission system is described by the following features:
\begin{itemize}
\item \emph{States}: Each user $i$ has a state $s^i \in S^i$, which is realized and known to user $i$ at the beginning of each time slot. The state $s^i$ consists of two parts, namely the
video traffic state $T^i$ and the channel state $h^i$. An example of a simplified (video) traffic state can be the types of video frames (I, P, or B
frame) available for transmission and the numbers of packets in each available video frame. An example of a channel state can be the channel quality
reported to the application layer by the lower layers. The network manager has a state $s^0 \in S^0$, which consists of all the users' channel
states, namely $s^0=(h^1,\ldots,h^I)$. In uplink transmission, the channel states of the users can be estimated by the network manager and fed back
to the users. Similarly, in the downlink, each user estimates its own channel state and feeds it back to the network manager.
\item \emph{Actions}: At each state $s^i$, each user $i$ chooses a packet scheduling action $a^i \in A^i(s^i)$. A packet scheduling action determines
how many packets of each available video frame should be transmitted. We allow the sets of actions taken under different states to be different, in
order to incorporate the minimum video quality requirements that we will discuss later.
\item \emph{Payoffs}: Each user $i$ has a payoff function $u^i: S^i \times A^i \rightarrow \mathbb{R}$. A typical payoff can be the distortion impact
of the transmitted packets plus the (negative) disutility incurred by the energy consumption in transmitting the packets.
\item \emph{State Transition}: Each user $i$'s next state depends on its current state and its current action. We denote the state transition by
$p^i(s^{i\prime}|s^i,a^i) \in \Delta(S^i)$, where $\Delta(S^i)$ is the probability distribution over the set of states.
\item \emph{Resource Constraints}: Given the users' channel states $(h^1,\ldots,h^I)$, the users' actions need to satisfy a resource constraint (e.g. the total
bandwidth or transmission opportunity is limited). Note that the resource constraint does not depend on the users' traffic states. Since the network
manager's state is the collection of the users' channel states, we write the resource constraint compactly as
$$
f(s^0,a^1,\ldots,a^I) \leq 0.
$$
\end{itemize}

A variety of multi-user wireless video transmission systems can be abstracted using the above general model. Next, we present a packet-level video
transmission model as an illustration.

\subsection{An Example Packet-Level Video Transmission Model}
Packet-level video transmission models have been proposed in a variety of related works, including \cite{ChouMiao}--\cite{FuVDS_JSAC2010}. In the
following, we briefly describe the model as an illustration of our general model, and refer interested readers to
\cite{ChouMiao}--\cite{FuVDS_JSAC2010} for more details.

We first consider a specific video user $i$, and hence drop the superscript before we discuss the resource constraints. The video source data is
encoded using an H.264 or MPEG video coder under a Group of Pictures (GOP) structure: the data is encoded into a series of GOPs, indexed by
$g=1,2,\ldots$, where one GOP consists of $N$ data units (DUs). Each DU $n=1,\ldots,N$ in GOP $g$, denoted $DU_n^g$, is characterized by its size
$l_n^g \in \mathbb{N}_+$ (i.e. the number of packets in it), distortion impact $q_n^g \in \mathbb{R}_+$, delay deadline $d_n^g \in \mathbb{N}_+$, and
dependency. The dependency of the DUs in a GOP is represented by a directed acyclic graph (DAG) \cite{ChouMiao}, where the vertices are DUs and an
edge from $DU_m^g$ to $DU_n^g$ indicates that $DU_n^g$ depends on $DU_m^g$. The dependency among the DUs in one GOP comes from encoding techniques
such as motion estimation/compensation. In general, if $DU_n^g$ depends on $DU_m^g$, we have $d_n^g \geq d_m^g$ and $q_n^g \leq q_m^g$, namely
$DU_m^g$ should be decoded before $DU_n^g$ and has a higher distortion impact than $DU_n^g$ \cite{FuVDS_CSVT2012}.

Among the above characteristics, the distortion impact $q_n^g$, delay deadline $d_n^g$, and the dependency are deterministic and fixed for the same
DUs across different GOPs (e.g. $q_n^g=q_n^{g+1}$) \cite{FuVDS_CSVT2012}\cite{FuVDS_JSAC2010}. The size of $DU_n^g$ is random following the
probability mass function $PMF_n$. As in \cite{FuVDS_JSAC2010}, we assume that the sizes of different DUs are independent random variables. Note that
the distributions of the sizes of the $n$th DUs in different GOPs are the same.

\begin{figure}
\centering
\includegraphics[width =3.5in]{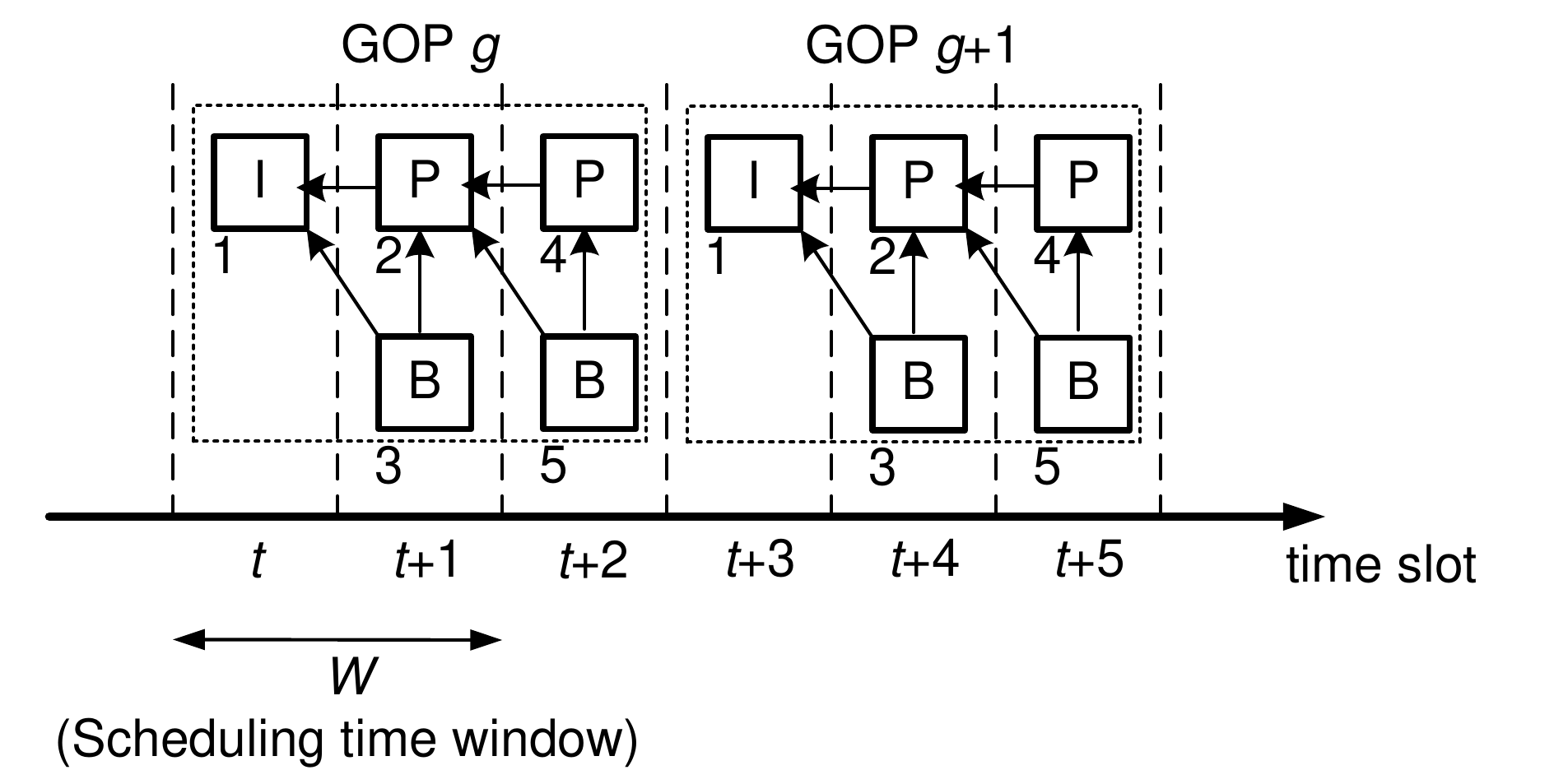}
\caption{Illustration of GOP (group of pictures), DU (data unit), and packet scheduling. Since the scheduling time window is $W=2$, the contexts in
different time slots are $C_t=\{DU_1^g,DU_2^g,DU_3^g\}$, $C_{t+1}=\{DU_2^g,DU_3^g,DU_4^g,DU_5^g\}$, $C_{t+2}=\{DU_4^g,DU_5^g,DU_1^{g+1}\}$,
$C_{t+3}=\{DU_1^{g+1},DU_2^{g+1},DU_3^{g+1}\}$, and so on.} \label{fig:SystemModel}
\end{figure}

\subsubsection{The Traffic State} Next we discuss how to construct the traffic state $T_t$. At time slot $t$, as in \cite{ChouMiao}\cite{FuVDS_CSVT2012}\cite{FuVDS_JSAC2010}, we assume that
the wireless user will only consider for transmission the DUs with delay deadlines in the range of $[t, t + W)$, where $W$ is referred to as the
scheduling time window (STW). We further assume as in \cite{ChouMiao}\cite{FuVDS_CSVT2012}\cite{FuVDS_JSAC2010} that the STW is chosen such that any
two DUs that have direct dependency can be considered for transmission in the same time slot. Following the model in
\cite{FuVDS_CSVT2012}\cite{FuVDS_JSAC2010}, at time slot $t$, we introduce a context to represent the set of DUs that are considered for
transmission, i.e., whose delay deadlines are within the range of $[t, t + W)$. We denote the context by $C_t = \left\{DU^g_j|d^g_j \in [t, t +
W)\right\}$. Since the GOP structure is fixed, the context $C_t$ is periodic with the duration $T$ of a GOP, which means that for any $DU^g_j \in
C_t$ , there exists $DU^{g+1}_j \in C_{t+T}$. Hence, $C_t$ and $C_{t+T}$ have the same types of DUs and the same DAG between these DUs. The
transition from context $C_t$ to $C_{t+1}$ is deterministic. An illustration of the context is given in Fig.~\ref{fig:SystemModel}.

Given the current context $C_t$, we let $x_{t,DU}$ denote the number of packets in the buffer associated with $DU \in C_t$. We denote the buffer
state of the DUs in $C_t$ by $x_t = \left\{x_{t,DU} | DU \in C_t\right\}$. The traffic state $T_t$ at time slot $t$ is then defined as $T_t = (C_t,
x_t)$, where the context represents the types of DUs, the dependencies among them, and the buffer state $x_t$ represents the amount of packets
remaining for transmission.

\subsubsection{The Channel State} At each time slot $t$, the wireless user experiences a channel condition $h_t \in \mathcal{H}$, where $\mathcal{H}$ is the set of finite
possible channel conditions and $h_t$ is referred to as the channel state. In this paper, we assume that the wireless channel is slow-fading (e.g.
remains the same in one time slot) and the channel condition $h_t$ can be modeled as a finite-state Markov chain with transition probability
$p_h(h^\prime|h) \in [0, 1]$ \cite{ZhangKassam}. Hence, the state which the wireless user experiences at each time slot $t$ is $s_t = (C_t, x_t,
h_t)$, which includes the current context, buffer state and channel state.

\subsubsection{Packet Scheduling Action} At time slot $t$, the wireless user decides how many packets should be transmitted from each DU in the current context. The decision is
represented by $a_t(C_t, x_t, h_t) = \{y_{t,DU} | DU \in C_t, y_{t,DU} \in [0, x_{t,DU}]\}$, where $y_{t,DU}$ represents the amount of packets
transmitted from DU $f$.

%

\subsubsection{Payoff} As in \cite{FuVDS_CSVT2012}, we consider the following \emph{instantaneous payoff} at each time slot $t$:
\begin{eqnarray}
u(s_t,a_t) = \sum_{DU \in C_t} q_{DU} y_{t,DU} - \beta \cdot \rho\left(h_t, \|a_t\|_1\right),
\end{eqnarray}
where the first term $\sum_{DU \in C_t} q_{DU} y_{t,DU}$ is the \emph{instantaneous video quality}, namely the distortion reduction obtained by
transmitting the packets from the DUs in the current context, and the second term $\beta \cdot \rho\left(h_t, \|a_t\|_1\right)$ represents the
disutility of the energy consumption by transmitting the packets. Since the packet scheduling action $a_t$ is a vector with nonnegative components,
we have
$$
\|a_t\|_1=\sum_{DU \in C_t} q_{DU} y_{t,DU},
$$
namely $\|a_t\|_1$ is the total number of transmitted packets. As in \cite{FuVDS_CSVT2012}, the energy consumption function $\rho(h, \|a\|_1)$ is
assumed to be convex in the total number of transmitted packets $\|a\|_1$ given the channel condition $h$. An example of such a function can be
$\rho(h,\|a\|_1)=\sigma^2(e^{2\|a\|_1 b}-1)/h$, where $b$ is the number of bits in one packet \cite{BertsekasGallager}. The payoff function is a
tradeoff between the distortion reduction and the energy consumption, where the relative importance of energy consumption compared to distortion
reduction is characterized by the tradeoff parameter $\beta>0$. In the simulation, we will set different values for $\beta$ to illustrate the
tradeoff between the distortion reduction and energy consumption.

\subsubsection{The Resource Constraint} In this work, we consider a 4G cellular network,
where there are $I$ wireless video users transmitting to the BS indexed by $0$. The users access the channels in a FDMA (frequency-division multiple
access) manner. We normalize the total bandwidth to be $1$, and will be divided and shared by the users.

We assume that each user $i$ uses adaptive modulation and coding (AMC) based on its channel condition. In other words, each user $i$ chooses a data
rate $r_t^i$ under the channel state $h_t^i$. Note that the rate selection is done by the physical layer and is not a decision variable in our
framework. Then as in \cite{LiZhangZhaoRangarajan_ICNP2010}\cite{YuHsiuPang_TMC2012}, we have the following resource constraint:
\begin{eqnarray}\label{eqn:BandwidthConstraint}
\sum_{i=1}^I \frac{\|a_{t}^i\|_1 b}{r_t^i(h_t^i)}\leq B,
\end{eqnarray}
where $\frac{\|a_{t}^i\|_1 \cdot b}{r_t^i(h_t^i)}$ is the bandwidth needed for transmitting the amount $\|a_{t}^i\|_1 \cdot b$ of bits given the data
rate $r_t^i(h_t^i)$. We can write the constraint compactly as $f(s_t^0, a_t^1, \ldots, a_t^N)\leq 0$.

\subsection{Extensions}
Numerous straightforward extensions of our model are possible. Below we discuss several.

First, note that the video traffic in our model can come from video sequences that are either encoded in real-time or offline and stored in the
memory before the transmission.

Second, although we consider the cellular network in this paper, the model can be readily used for multi-user wireless video transmission over
802.11a/e wireless LANs \cite{vanderSchaarAndreopoulosHu}.

Finally, for clarity of the presentation, we introduce the traffic state $(C_t,x_t)$ and the channel state $h_t$ separately as independent states.
The proposed solution can be applied to a more general multi-user resource allocation and scheduling problem, where the traffic state and the channel
state are correlated. Such a model can be easily adopted in our framework and is useful for real-time video encoding scenarios, where the video
traffic is generated by the video encoder at run-time, while considering the time-varying channel condition and the resource price.

\section{The Design Problem} \label{sec:Problem}
Each user performs packet scheduling based on its state $s_t$. Hence, each user $i$'s strategy can be defined as a mapping $\pi_i(s_t^i) \in
A^i(s_t^i)$, where $A^i(s_t^i)$ is the set of actions available under state $s_t^i$. We allow the set of available actions to depend on the state, in
order to capture the minimum video quality guarantee. For example, we may have a minimum distortion impact reduction requirement $D^i$ for user $i$
at any time, which gives us
$$
A^i(s_t^i) = \left\{a_t^i: \sum_{DU \in C_t^i} q_{DU} y_{t,DU}^i \geq D^i \right\}.
$$

The users aim to maximize their expected long-term payoff. Given its initial state $s_0^i$, each user $i$'s strategy $\pi^i$ induce a probability
distribution over the sequences of states $s_1^i,s_2^i,\ldots$, and hence a probability distribution over the sequences of instantaneous payoffs
$u_0^i, u_1^i,\ldots$. Taking expectation with respect to the sequences of payoffs, we have user $i$'s \emph{long-term payoff} given the initial
state as
\begin{eqnarray}\label{eqn:LongTermAggregatorCost}
U^i(\pi^i|(s_0^i)) = \mathbb{E} \left\{ (1-\delta) \sum_{t=0}^{\infty} \left(\delta^t \cdot u_t^i\right) \right\},
\end{eqnarray}
where $\delta \in [0,1)$ is the discount factor.

The design problem can be formulated as
\begin{eqnarray}\label{eqn:DesignProblem}
& \min_{\bm{\pi}} & \sum_{s_0^1,\ldots,s_0^I} \sum_{i=1}^I
U^i(\pi^i|(s_0^i)) \\
& s.t. & \mathrm{minimum~video~quality~guarantee:} \nonumber \\
&      & \pi^i(s^i) \in A^i(s^i), ~\forall i,s^i, \nonumber \\
&      & \mathrm{resource~constraint:} \nonumber \\
&      & f(s^0,\pi^1(s^1),\ldots,\pi^I(s^I)) \leq 0, ~\forall s^0. \nonumber
\end{eqnarray}
Note that the design problem \eqref{eqn:DesignProblem} is a weakly-coupled MU-MDP as defined by \cite{Hawkins2003}. It is a MU-MDP because there are
multiple users making foresighted decisions. The MU-MDP is weakly-coupled, because the users influence each other only through the resource
constraints, but the users' actions do not directly affect the others' payoffs. Such a weakly-coupled MU-MDP has been studied in a general setting
\cite{Hawkins2003} and in wireless video transmission \cite{FuVDS_JSAC2010}, both adopting a dual decomposition approach based on uniform price (i.e.
the same Lagrangian multiplier for the resource constraints under all the states).

Note also that we sum up the network utility $\sum_{i=1}^I U^i(\pi^i|(s_0^i))$ under all the initial states $(s_0^1,\ldots,s_0^I)$. This can be
interpreted as the expected network utility when the initial state is uniformly distributed. The optimal stationary strategy profile that maximizes
this expected social welfare will also maximize the social welfare given any initial state.

However, the design problem \eqref{eqn:DesignProblem} is very challenging, and has never been solved optimally. To better understand this, let us
assume that a central controller would exist which knows the complete information of the system (i.e. the states, the state transitions, the payoff
functions) at each time step. Then, this central controller can solve the above problem \eqref{eqn:DesignProblem} as a centralized MDP (e.g. using
well-known Value Iteration or Policy Iteration methods) and obtain the solution to the design problem $\bm{\pi}^\star$ and the optimal value function
$U^\star$. However, the multi-user wireless video system we discussed is inherently informationally-decentralized and there is no entity in the
network that possesses the complete information. Moreover, the computational complexity of solving \eqref{eqn:DesignProblem} is prohibitively high.
Hence, our goal is to develop an optimal decentralized algorithm that converges to the optimal solution.

\section{Optimal Foresighted Video Transmission} \label{sec:DesignFramework}
In this section, we show how to determine the optimal foresighted video transmission policies. We propose an algorithm that allows each entity to
make decisions based on its local information and limited information exchange between the BS and the users. Specifically, in each time slot, the BS
sends resource prices to each user and the users send their total numbers of packets to transmit to the BS. The BS keeps updating the resource prices
based on the resource usage by the users, and obtains the optimal resource prices based on which the users' optimal individual packet scheduling
decisions achieve the optimal network utility. An overview of the main results and the structure of the proposed solution is given in
Fig.~\ref{fig:IllustrationSolution}.

\begin{figure}
\centering
\includegraphics[width =3.5in]{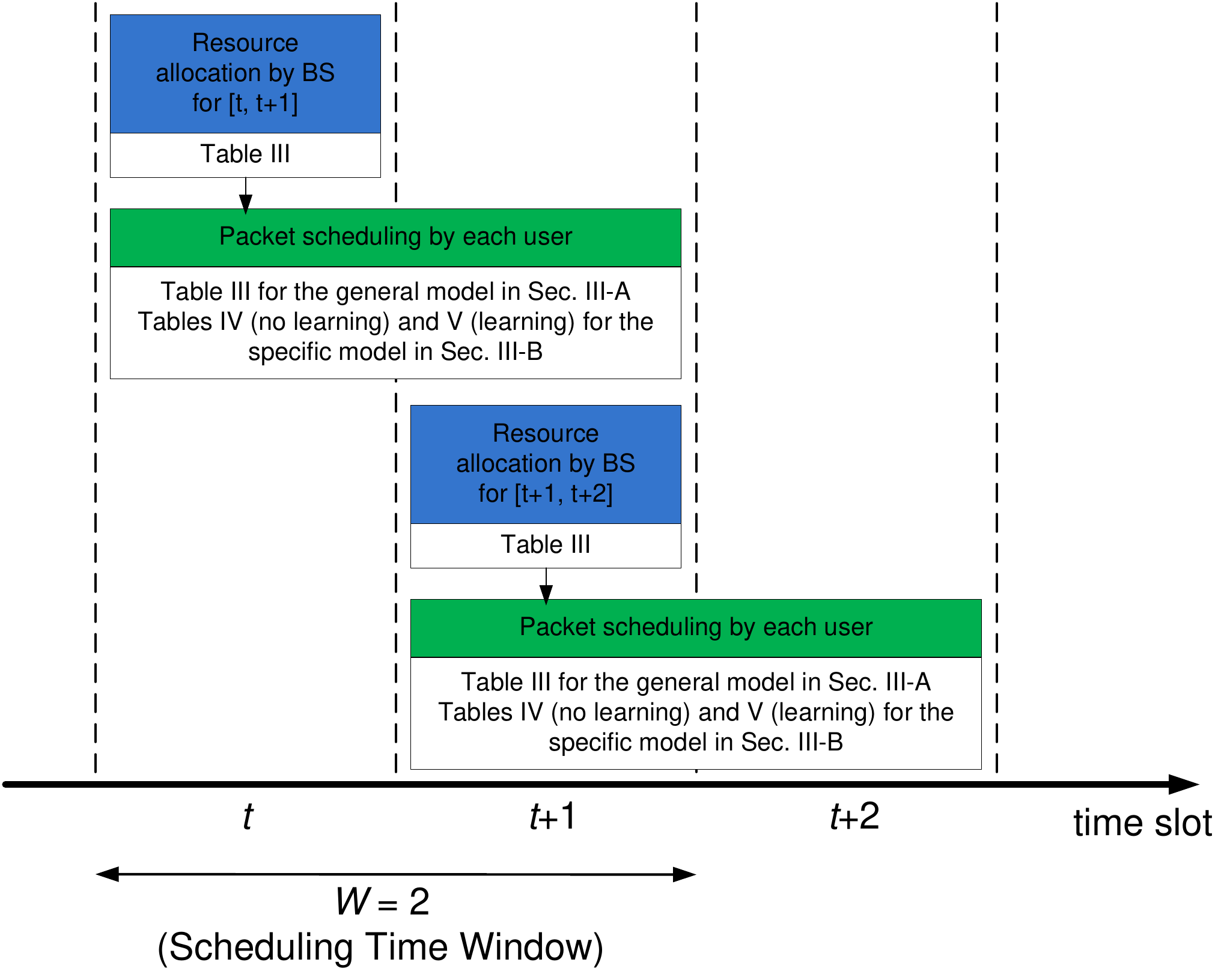}
\caption{Illustration of the resource allocation and packet scheduling in the proposed solution.} \label{fig:IllustrationSolution}
\end{figure}

\subsection{Decoupling of The Users' Decision Problems}
Each user aims to maximize its own long-term payoff $U^i(\pi^i|(s_0^i))$ subject to the constraints. Specifically, given the other users' strategies
$\bm{\pi}^{-i}=(\pi^1,\ldots,\pi^{i-1},\pi^{i+1},\ldots,\pi^I)$ and states $\bm{s}^{-i}=(s^1,\ldots,s^{i-1},s^{i+1},\ldots,s^I)$, each user $i$ solve
the following long-term payoff maximization problem:
\begin{eqnarray}
\pi^i = & \displaystyle\arg\max_{\tilde{\pi}^i} & U_i(\tilde{\pi}^i|(s_0^i)) \\
        & s.t.                     & \tilde{\pi}^i(s^i) \in A^i(s^i), ~\forall s^i, \nonumber \\
        &                          & f(s^0,\tilde{\pi}^i(s^i),\bm{\pi}^{-i}(\bm{s}^{-i})) \leq 0. \nonumber
\end{eqnarray}
Assuming that the user knows all the information (i.e. the other users' strategies $\bm{\pi}^{-i}$ and states $\bm{s}^{-i}$), user $i$'s optimal
value function should satisfy the following:
\begin{eqnarray}\label{eqn:DecisionProblem_EachUser_Original}
V(s^i) = \!\!\!\!\!\!\!\!& \displaystyle\max_{a^i \in A^i(s^i)} & (1-\delta) u^i(s^i,a^i) + \delta \sum_{s^{i \prime}} p^i(s^{i \prime}|s^i,a^i)
V(s^{i \prime}) \nonumber \\
& s.t. & f(s^0,a^i,\bm{\pi}^{-i}(\bm{s}^{-i})) \leq 0.
\end{eqnarray}
Note that the above equations would be the Bellman equations, if the user knew the other users' strategies $\bm{\pi}^{-i}$ and states $\bm{s}^{-i}$
and the BS's state $s^0$ (i.e. the channel states of all the users). However, such information is never known to a particular user. Without such
information, one user cannot solve the decision problem above because the resource constraint contains unknown variables. Hence, we need to separate
the influence of the other users' decisions from each user's decision problem.

One way to decouple the interaction among the users is to remove the resource constraint and add it as a penalty to the objective function. Denote
the Lagrangian multiplier (i.e. the ``price'') associated with the constraint under state $s^0$ as $\lambda^0(s^0)$. Then the penalty at state $s^0$
is
$$
- \lambda^0(s^0) \cdot f(s^0, a^1, \ldots, a^I) = - \lambda^0(s^0) \cdot \left( \sum_{i=1}^I \frac{\|a^i\|_1 b}{r^i(h^i)} - B \right).
$$
Since the term $- \lambda^0(s^0) \cdot \left( \sum_{j \neq i} \frac{\|a^j\|_1 b}{r^j(h^j)} - B \right)$ is a constant for user $i$, we only need to
add $- \lambda^0(s^0) \cdot \frac{\|a^i\|_1 b}{r^i(h^i)}$ to each user $i$'s objective function. Hence, we can rewrite user $i$'s decision problem as
\begin{eqnarray}\label{eqn:DecisionProblem_EachUser}
& & \tilde{V}^{\lambda^i(s^0)}(s^i) \\
&=& \max_{a^i \in A^i(s^i)} (1-\delta) \left[ u_i(s^i,a^i) - \lambda^i(s^0) \cdot \|a^i\|_1 \right] \nonumber \\
& & ~~~~~~~~+~\delta \cdot \sum_{s_i^\prime} \left[p^i(s^{i \prime}|s^i,a^i) \tilde{V}^{\lambda^i(s^0)}(s^{i \prime})\right], \nonumber
\end{eqnarray}
where $\lambda^i(s^0) = \lambda^0(s^0) \cdot \frac{b}{r^i(h^i)}$. By contrasting \eqref{eqn:DecisionProblem_EachUser} with
\eqref{eqn:DecisionProblem_EachUser_Original}, we can see that given the price $\lambda^i$, each user can make decisions based only on its local
information since the resource constraint is eliminated. Note, importantly, that the above decision problem \eqref{eqn:DecisionProblem_EachUser} for
each user $i$ is different from that in \cite{FuVDS_JSAC2010} with uniform price. This can be seen from the term $\lambda^i(s^0) \cdot \|a^i\|_1$ in
\eqref{eqn:DecisionProblem_EachUser}, where the price $\lambda^i(s^0)$ is user-specific and depends on the state, while the uniform price in
\cite{FuVDS_JSAC2010} is a constant $\lambda$. The decision problem \eqref{eqn:DecisionProblem_EachUser} is also different from the subproblem
resulting from dual decomposition in NUM, because it is a foresighted optimization problem that aims to maximize the long-term payoff. This requires
a different method to calculate the optimal Lagrangian multiplier $\lambda^i(s^0)$ than that in NUM.

\subsection{Optimal Decentralized Video Transmission Strategy}
For the general model described in Section~\ref{sec:model}-A, we propose an algorithm used by the BS to iteratively update the prices and by the
users to update their optimal strategies. The algorithm will converge to the optimal prices and the optimal strategy profile that achieves the
minimum total system payoff $U^\star$. The algorithm is described in Table~\ref{table:DecentralizedStrategy}.

\begin{theorem}\label{theorem:Convergence}
The algorithm in Table~\ref{table:DecentralizedStrategy} converges to the optimal strategy profile, namely
$$
\begin{array}{c} \lim_{k \rightarrow
\infty} \left| \sum_{s_0^1,\ldots,s_0^I} \sum_{i=1}^I U_i(\pi^{i,\lambda^i_{k}}|s_0^i) - U^\star \right| = 0 \end{array}.
$$
\end{theorem}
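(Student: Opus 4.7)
The plan is to prove convergence via a dual-decomposition argument with two main ingredients: strong duality of the state-dependent Lagrangian relaxation, and convergence of a stochastic subgradient iteration on the resulting dual problem. First I would recast the design problem \eqref{eqn:DesignProblem} as a linear program in the space of discounted occupation measures $\mu^i(s^i,a^i)$ of each user's stationary policy. Because the per-user dynamics are independent and the users interact only through the family of linear constraints $f(s^0,a^1,\ldots,a^I)\le 0$ (one per BS state $s^0$), the joint LP is a weakly-coupled MDP in the sense of \cite{Hawkins2003}, and introducing one multiplier $\lambda^0(s^0)\ge 0$ per channel-state $s^0$ gives a Lagrangian whose inner maximization separates across users. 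That inner problem for user $i$ is precisely the Bellman equation \eqref{eqn:DecisionProblem_EachUser} with individualized price $\lambda^i(s^0)=\lambda^0(s^0)\cdot b/r^i(h^i)$, so each user's decoupled MDP is a valid Lagrangian subproblem of the joint LP.

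The second step is to establish \emph{zero duality gap}. Because the occupation-measure LP is convex (in fact linear) in $\bm\mu$, and because we have assigned a \emph{separate} dual variable to the resource constraint in each $s^0$, complementary slackness can be satisfied state-by-state: in states where the resource is not binding, the corresponding $\lambda^{0,\star}(s^0)$ is zero, and in states where it is binding we have $\lambda^{0,\star}(s^0)\ge 0$. This is exactly what fails under the uniform-price scheme of \cite{FuVDS_JSAC2010}. Strong duality then yields that any strategy profile $\bm\pi^\star$ that is a per-user best response to the optimal price vector $\bm\lambda^\star$ and that is feasible for the primal attains the optimum $U^\star$.

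The third step is to show that the BS's price update converges to $\bm\lambda^\star$. Let $g(\bm\lambda)=\sum_i \sum_{s_0^i} \tilde V^{\lambda^i}(s_0^i)+\sum_{s^0}\lambda^0(s^0)\,B\cdot\mathrm{Pr}(s^0)$ be the dual function. As a pointwise supremum of functions affine in $\bm\lambda$, $g$ is convex, and a subgradient of $g$ with respect to $\lambda^0(s^0)$ at $\bm\lambda_k$ is exactly $B-\sum_i \|a^{i,\star}_k\|_1 b/r^i(h^i)$ evaluated at the users' current best responses in channel state $s^0$ — which is precisely the quantity the BS observes through the per-slot resource-usage reports in Table~\ref{table:DecentralizedStrategy}. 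A standard Robbins--Monro / stochastic subgradient theorem, under step sizes satisfying $\sum_k \alpha_k=\infty$ and $\sum_k \alpha_k^2<\infty$, therefore gives $\bm\lambda_k\to\bm\lambda^\star$ almost surely. Continuity of the per-user optimal value function $\tilde V^{\lambda^i}$ in $\lambda^i$, which follows from the classical perturbation theory of finite-state discounted MDPs with compact action sets, then converts dual convergence into $\sum_{s_0^1,\ldots,s_0^I}\sum_i U^i(\pi^{i,\lambda^i_k}|s_0^i)\to U^\star$.

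The hardest part of the argument will be the stochastic subgradient step. Two technical issues need care. First, the per-user best response $\pi^{i,\lambda^i}$ need not be unique at iterates where the Bellman operator has ties, so the subgradient of $g$ is in general set-valued and one needs a measurable selection so that the users' reported resource usages form a valid subgradient sequence; this is where almost-sure convergence must be argued rather than a deterministic descent. Second, if the users themselves are \emph{learning} $\tilde V^{\lambda^i}$ online (e.g.\ via the PDS-based scheme alluded to in the introduction) rather than solving \eqref{eqn:DecisionProblem_EachUser} exactly, the BS's observed usages are noisy estimates of true expectations, and convergence must be handled via a martingale-difference decomposition with a two-timescale argument ensuring the user-side value-function updates are asymptotically faster than the BS-side price updates. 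Modulo these standard but nontrivial stochastic-approximation details, the chain strong duality $\Rightarrow$ dual convergence $\Rightarrow$ primal convergence closes the proof.
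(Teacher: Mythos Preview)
Your proposal is correct and follows essentially the same three-part architecture as the paper's proof: Lagrangian decoupling via state-dependent multipliers $\lambda^0(s^0)$, strong duality, and convergence of a stochastic subgradient iteration on the dual. The paper carries this out directly at the Bellman-equation level (its Lemma~\ref{lemma:DecompositionValueFunction} shows $V^{\lambda}$ decomposes as $\sum_i V^{\lambda,i}$ because $f$ is linear in the actions and the users' transitions are independent), then asserts convexity of the primal from the convexity of the cost functions, and finally notes that $V^{\lambda}$ is convex piecewise linear in $\lambda$ with subgradient $f(s^0,a^1,\ldots,a^I)$, so the sample-based price update is stochastic subgradient descent with step size $1/(k+1)$.

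The one genuine methodological difference is your use of the discounted occupation-measure LP to make the convexity and strong-duality claims rigorous; the paper instead argues convexity directly from the cost functions and the Bellman operator, which is terser but less self-contained. Your route buys a cleaner justification of zero duality gap (linearity of the LP plus one multiplier per $s^0$) and makes explicit the measurable-selection and continuity-in-$\lambda$ steps that the paper leaves implicit. The paper's route is shorter and stays closer to the value-function objects that appear in the algorithm itself. Both arrive at the same conclusion by the same mechanism, so your proposal is a valid (and somewhat more careful) rendering of the paper's own argument.
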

\begin{proof}
See the appendix.
\end{proof}

\begin{table}
\renewcommand{\arraystretch}{1.3}
\caption{Distributed algorithm to compute the optimal decentralized video transmission strategy.} \label{table:DecentralizedStrategy} \centering
\begin{tabular}{l}
\hline
\textbf{Input:} Performance loss tolerance $\epsilon$ \\
\hline
\textbf{Initialization:} Set $k=0$, $\bar{a}^i_0=0,\forall i$, $\lambda^i_0=0,\forall i$. \\
Each user $i$ observes $s^i$, the BS observes $s^0$ \\
\textbf{repeat} \\
~~~~Each user $i$ solves the packet scheduling problem \eqref{eqn:DecisionProblem_EachUser} to obtain $\pi^{i,\lambda_k^{i}(s^0)}$ \\
~~~~Each user $i$ submits its bandwidth request $\|\pi^{i,\lambda_k^{i}(s^0)}(s^i)\|_1 \cdot \frac{b}{r^i(h^i)}$ \\
~~~~The BS updates $\bar{a}^i_{k+1}=\bar{a}^i_{k} + \|\pi^{i,\lambda_k^{i}(s^0)}(s^i)\|_1 \cdot \frac{b}{r^i(h^i)}$ for all $i$ \\
~~~~The BS updates the prices (stochastic subgradient update): \\
~~~~~~~~$\lambda^i_{k+1}(s^0) = \lambda^i_{k}(s^0) + \frac{1}{k+1} \cdot \left( \sum_{i=1}^I \frac{\|\pi^{i,\lambda_k^{i}(s^0)}(s^i)\|_1 \cdot b}{r^i(h^i)} - B \right)$\\
\textbf{until} $\|\lambda^i_{k+1}(s^0)-\lambda^i_{k}(s^0)\| \leq \epsilon$ \\
\hline
\textbf{Output:} optimal prices $\{\lambda^i(s^0)\}_{i=1}^I$, optimal schedules $\{\pi^{i,\lambda^i(s^0)}(s^i)\}_{i=1}^I$ \\
\hline
\end{tabular}
\end{table}

\begin{figure}
\centering
\includegraphics[width =3.5in]{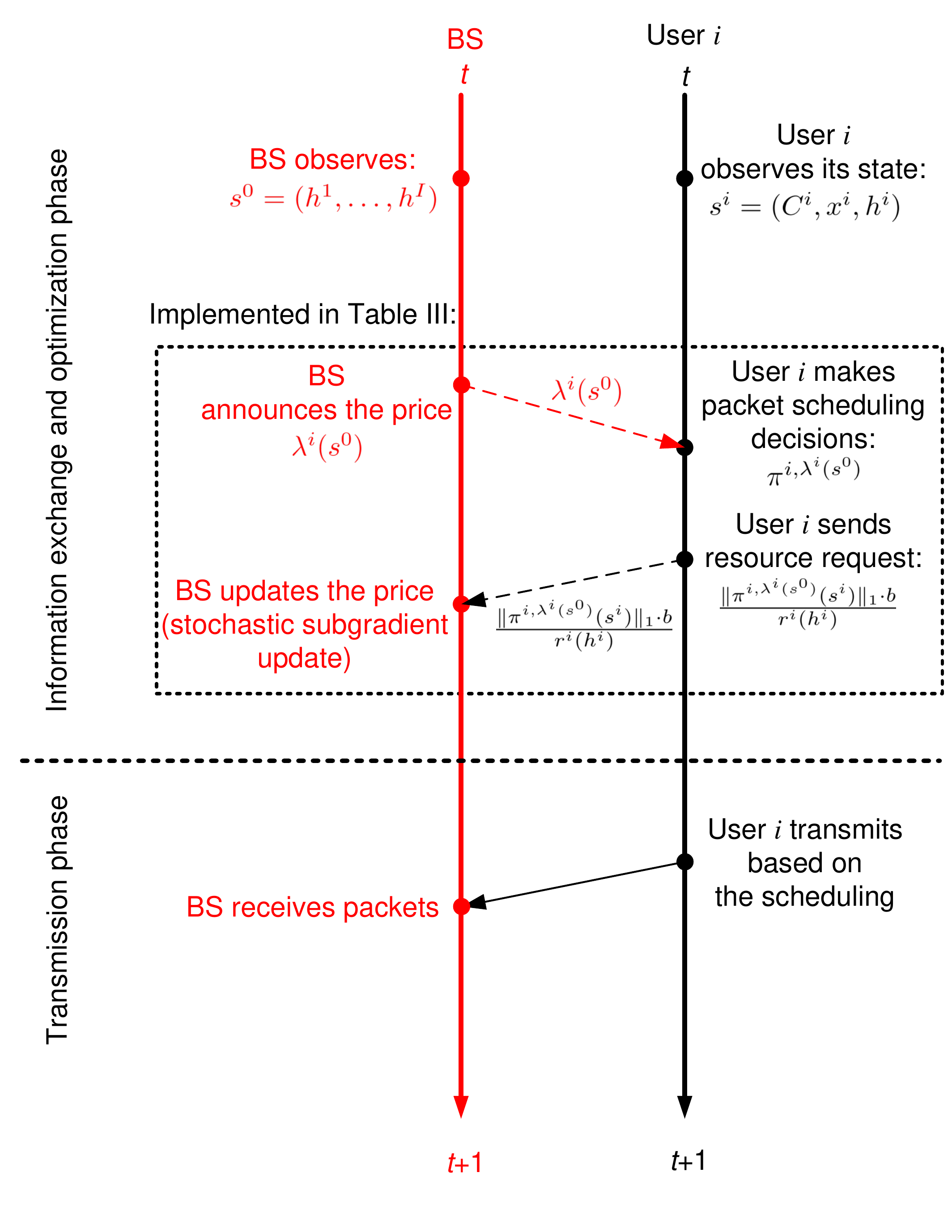}
\caption{Illustration of the interaction between the BS and user $i$ (i.e. their decision making and information exchange) in one period.}
\label{fig:TimeLine}
\end{figure}

We illustrate the the BS's and users' updates and their information exchange in one time slot in Fig.~\ref{fig:TimeLine}. The beginning of one time
slot is the information exchange and optimization phase, in which the BS and the users exchange information such that the BS can update the resource
prices and the users can make optimal packet scheduling decisions. Specifically, given the resource price $\lambda_k^{i}(s^0)$, each user $i$ makes
individual packet scheduling decisions $\pi^{i,\lambda_k^{i}(s^0)}$, and sends the BS its bandwidth request $\|\pi^{i,\lambda_k^{i}(s^0)}(s^i)\|_1
\cdot \frac{b}{r^i(h^i)}$. Then the BS updates the prices based on the users' bandwidth requests using the stochastic subgradient method, which can
be performed easily. The difference from the dual decomposition in NUM is that each user's decision problem in our work is a foresighted optimization
problem aiming to maximize the long-term, instead of instantaneous, payoff. Our algorithm is also different from the algorithm in
\cite{FuVDS_JSAC2010} in that we have different prices for different users.

From Fig.~\ref{fig:TimeLine}, we can clearly see what information (namely resource prices $\lambda_k^{i}(s^0)$ and bandwidth requests
$\|\pi^{i,\lambda_k^{i}(s^0)}(s^i)\|_1 \cdot \frac{b}{r^i(h^i)}$) is exchanged. The amount of information exchange is small ($O(I)$), compared to the
amount of information required by each user to solve the decision problem \eqref{eqn:DecisionProblem_EachUser_Original} directly ($\prod_{j\neq i}
|S_i|$ states plus the strategies $\bm{\pi}_{-i}$). In other words, the algorithm enables the entities to exchange a small amount ($O(I)$) of
information and reach the optimal video transmission strategy that achieves the same performance as when each entity knows the complete information
(i.e. the states and the strategies of all the entities) about the system.

\subsection{Optimal Packet Scheduling}
In the previous subsection, we propose an algorithm of optimal foresighted resource allocation and packet scheduling for the general video
transmission model described in Section~\ref{sec:model}-A. In the algorithm, each user's packet scheduling decision is obtained by solving the
Bellman equation \eqref{eqn:DecisionProblem_EachUser} (see Table~\ref{table:DecentralizedStrategy}). The Bellman equation
\eqref{eqn:DecisionProblem_EachUser} can be solved by a variety of standard techniques such as value iteration. However, the computational complexity
of directly applying value iteration may be high, because each user's state contains the information of all DUs and thus each user's state space can
be very large. In the following, we show that for the specific model described in Section~\ref{sec:model}-B, we can greatly simplify the packet
scheduling decision problem. The key simplification comes from the decomposition of each user's packet scheduling problem with multiple DUs into
multiple packet scheduling problems with single DU. In this way, we can greatly reduce the number of states in each single-DU packet scheduling
problem, such that the total complexity of packet scheduling grows linearly, instead of exponentially without decomposition, with the number of DUs.

The decomposition closely follows the decomposition of multiple-DU packet scheduling problems proposed in \cite{FuVDS_CSVT2012}. The only difference
is that the decision problem \eqref{eqn:DecisionProblem_EachUser} in our work has an additional term $\lambda^i(s^0) \cdot \|a^i\|_1$ due to the
price, while such a term does not exist in \cite{FuVDS_CSVT2012} because the \emph{single}-user packet scheduling problem is considered in
\cite{FuVDS_CSVT2012}.

First, we can greatly simplify the multiple-DU packet scheduling problem by the following structural results.

\begin{lemma}[Structural Result]
Suppose $DU_1 \in C_t$ and $DU_2 \in C_t$. If $DU_2$ depends on $DU_1$, we should schedule the packets of $DU_1$ before scheduling the packets of
$DU_2$.
\end{lemma}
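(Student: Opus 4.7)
The plan is a pairwise interchange (swap) argument applied to an optimal policy. Suppose $\pi^\star$ is optimal and at some slot $t$, with $DU_1, DU_2 \in C_t$ and $DU_2$ depending on $DU_1$, the action $\pi^\star(s_t^i)$ assigns $y_{t,2} \geq 1$ while leaving $y_{t,1} < x_{t,1}$. I will construct a modified policy $\pi'$ that at slot $t$ increments $y_{t,1}$ by one and decrements $y_{t,2}$ by one, and at every future slot follows a shadow continuation of $\pi^\star$ to be defined below. Showing that $\pi'$ weakly dominates $\pi^\star$ contradicts optimality unless $y_{t,2} = 0$ or $y_{t,1} = x_{t,1}$, which, after iterating the swap over all violating pairs, yields the claimed ordering.

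The immediate-slot accounting is clean. Since the swap leaves $\|a_t^i\|_1$ invariant, the energy term $\beta \rho(h_t^i, \|a_t^i\|_1)$, the price charge $\lambda^i(s_t^0) \|a_t^i\|_1$ appearing in \eqref{eqn:DecisionProblem_EachUser}, and the resource usage that feeds \eqref{eqn:BandwidthConstraint} are all unchanged. The only change is in the distortion-reduction term $\sum_{DU \in C_t} q_{DU} y_{t,DU}$, which increases by $q_1 - q_2 \geq 0$ by the dependency hypothesis recalled in Section~\ref{sec:model}-B. Hence the immediate payoff in the scaling of \eqref{eqn:LongTermAggregatorCost} increases by $(1-\delta)(q_1 - q_2) \geq 0$.

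To handle the future, I would couple the trajectories of $\pi'$ and $\pi^\star$ using identical channel and packet-size realizations. The buffer under $\pi'$ at slot $t+1$ differs from that of $\pi^\star$ only in having one fewer $DU_1$ packet and one more $DU_2$ packet. The shadow continuation is the following: if $\pi^\star$ transmits at least one $DU_1$ packet at some future slot $\tau > t$ while $DU_1$ is still in the context, have $\pi'$ replace the earliest such transmission by a $DU_2$ transmission in the same slot; otherwise, leave the extra $DU_2$ packet unsent and mimic $\pi^\star$ exactly. Because $d_2 \geq d_1$, $DU_2$ is in $C_\tau$ whenever $DU_1$ is, so the substitution is feasible, and it preserves $\|a_\tau^i\|_1$ (and therefore the energy term, the price term, and the BS resource constraint). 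The discounted future loss is at most $\delta^{\tau - t}(1-\delta)(q_1 - q_2)$, strictly less than the immediate gain $(1-\delta)(q_1-q_2)$; the net change is thus nonnegative, and strictly positive whenever any substitution actually occurs.

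The main obstacle I expect is the verification that the shadow continuation respects the minimum-video-quality constraint $a_\tau^i \in A^i(s_\tau^i)$ at every future slot, since replacing a $DU_1$ packet by a $DU_2$ packet decreases $\sum_{DU} q_{DU} y_{\tau, DU}$ by $q_1 - q_2$ at the substitution slot. I plan to handle this either by deferring the substitution to the latest admissible $\tau$, or, in borderline cases where no admissible $\tau$ exists, by forgoing the substitution entirely and keeping only the strictly positive immediate gain $(1-\delta)(q_1 - q_2)$ at slot $t$. Once this bookkeeping is in place, taking expectations over the coupled sample paths shows $\pi'$ weakly dominates $\pi^\star$, completing the interchange step and, by iteration, establishing the lemma.
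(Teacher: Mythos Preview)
Your proposal is correct and follows essentially the same approach as the paper's own proof: both rest on the three observations that (i) dependency implies $q_1 \geq q_2$ and $d_1 \leq d_2$, and (ii) the energy term and (iii) the price/resource term depend only on $\|a_t^i\|_1$, so swapping a $DU_2$ packet for a $DU_1$ packet never hurts. The paper's argument is a one-sentence sketch that stops at these observations, whereas you formalize the interchange with an explicit coupling of future sample paths and a shadow continuation; your version is strictly more detailed, and the minimum-quality-constraint corner case you flag is one the paper's sketch does not address at all.
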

\begin{IEEEproof}
The proof is straightforward and similar to the proof of \cite[Lemma~1]{FuVDS_CSVT2012}. If $DU_2$ depends on $DU_1$, then $DU_1$ has higher
distortion impact and earlier deadline, which means that it always achieves a higher distortion reduction to schedule packets of $DU_1$. In addition,
the contributions in energy consumption and resource payment do not depend on which DU the packets come from. Hence, we should always schedule the
packets of $DU_1$ before scheduling the packets of $DU_2$.
\end{IEEEproof}


Although Lemma~1 is straightforward, it greatly simplifies the scheduling problem because we can now take advantage of the partial ordering of the
DUs. However, this still does not solve the scheduling decision for the DUs that are not dependent on each other. Next, we provides the algorithm of
optimal packet scheduling in Table~\ref{table:OptimalPacketScheduling}. The algorithm decomposes the multiple-DU packet scheduling problem into a
sequence of single-DU packet scheduling problems, and determines how many packets to transmit for each DU sequentially. This greatly reduces the
total computational complexity (which is linear in the number of DUs) compared to solving the multiple-DU packet scheduling problem directly (in
which the number of states grows exponentially withe number of DUs). The algorithm is similar to \cite[Algorithm~2]{FuVDS_CSVT2012}. The only
difference is the term $\lambda^i(s^0) \cdot \|a^i\|_1$.

\begin{table}
\renewcommand{\arraystretch}{1.3}
\caption{Optimal packet scheduling algorithm for each user.} \label{table:OptimalPacketScheduling} \centering
\begin{tabular}{l}
\hline
\textbf{Input:} Directed acyclic graph given the current context: $DAG(C_t)$ \\
\hline
\textbf{Initialization:} Set $DAG_1=DAG(C_t)$. \\
\textbf{For} $k=1,\ldots,|C_t|$ \\
~~~~$DU_k = \displaystyle\arg\max_{DU \in root(DAG_k)} \max_{0\leq y \leq x_{t,DU}} (1-\delta) \left[ q_{DU} \cdot y - \lambda^{i}_k(s^0) \cdot y \right]$ \\
~~~~~~~~~~~~~~~~~~~~~~~~~~~~~~~~~~~~~~~~~$+~\delta \cdot \sum_{s^{i \prime}} \left[p^i(s^{i \prime}|s^i,a_{f,t}) \tilde{V}^{i,\lambda^{i,(k)}(s^0)}(s^{i \prime} )\right]$ \\
~~~~$y^*_{t,DU_k} = \displaystyle\arg\max_{0\leq y \leq x_{t,DU_k}} (1-\delta) \left[ q_{DU_k} \cdot y - \lambda^{i}_k(s^0) \cdot y \right]$ \\
~~~~~~~~~~~~~~~~~~~~~~~~~~~~~~~~~~~~~~~~~$+~\delta \cdot \sum_{s^{i \prime}} \left[p^i(s^{i \prime}|s^i,a_{f,t}) \tilde{V}^{i,\lambda^{i,(k)}(s^0)}(s^{i \prime} )\right]$ \\
~~~~$DAG_{k+1} = DAG_k \setminus \{DU_k\}$ \\
\textbf{End For} \\
\hline
\end{tabular}
\end{table}

\subsection{Learning Unknown Dynamics}
In practice, each entity may not know the dynamics of its own states (i.e., its own state transition probabilities) or even the set of its own
states. When the state dynamics are not known a priori, each entity cannot solve their decision problems using the distributed algorithm in
Table~\ref{table:DecentralizedStrategy}. In this case, we can adapt the online learning algorithm based on post-decision state (PDS) in
\cite{FuVDS_CSVT2012}, which was originally proposed for single-user wireless video transmission, to the considered deployment scenario.

The main idea of the PDS-based online learning is to learn the post-decision value function, instead of the value function. Each user $i$'s
post-decision value function is defined as $\tilde{U}^i(\tilde{x}^i,\tilde{h}^i)$, where $(\tilde{x}^i,\tilde{h}^i)$ is the post-decision state. The
difference from the normal state is that the PDS $(\tilde{x}^i,\tilde{h}^i)$ describes the status of the system \emph{after} the scheduling action is
made but before the DUs in the next period arrive. Hence, the relationship between the PDS and the normal state is
$$
\tilde{x}^i = x^i - a^i,~\tilde{h}^i=h^i.
$$
Then the post-decision value function can be expressed in terms of the value function as follows:
\begin{eqnarray*}
\tilde{U}^i(\tilde{x}^i,\tilde{h}^i) = \sum_{x^{i\prime},h^{i \prime}} p^i(x^{i \prime},h^{i \prime}|\tilde{x}^i + a^i,\tilde{h}^i)\cdot
\tilde{V}^i(x^{i \prime}, \tilde{h}^i).
\end{eqnarray*}
In PDS-based online learning, the normal value function and the post-decision value function are updated in the following way:
\begin{eqnarray*}
V^{i}_{k+1}(x^{i}_k, h^{i}_k) &=& \max_{a^i}~(1-\delta) \cdot u^i(x^{i}_k,h^{i}_k,a^i) \\
                              & & ~~+~\delta \cdot U^{i}_k(x^{i}_k+(a^i-l^{i}_k),h^{i}_k), \\
U^{i}_{k+1}(x^{i}_k, h^{i}_k) &=& (1-\frac{1}{k}) U^{i}_k(x^{i}_k, h^{i}_k) \\
                              &+& \frac{1}{k} \cdot V^{i}_k(x^{i}_k-(a^i-l^{i}_k), h^{i}_k).
\end{eqnarray*}
We can see that the above updates do not require any knowledge about the state dynamics. In particular, we propose the decomposed optimal packet
scheduling with PDS-based learning in Table~\ref{table:OptimalPacketSchedulingLearning}. Note that the difference between the learning algorithm in
Table~\ref{table:OptimalPacketSchedulingLearning} with the algorithm assuming statistic knowledge in Table~\ref{table:OptimalPacketScheduling} is
that we use the post-decision state value function instead of the normal value function. It is proved in \cite{FuVDS_CSVT2012} that the PDS-based
online learning will converge to the optimal value function. Hence, the distributed packet scheduling and resource allocation solution in
Table~\ref{table:DecentralizedStrategy} can be modified by letting each user perform the packet scheduling using the PDS-based learning in
Table~\ref{table:OptimalPacketSchedulingLearning}.

\begin{table} \scriptsize
\renewcommand{\arraystretch}{1.3}
\caption{Optimal decomposed packet scheduling algorithm with PDS-based learning.} \label{table:OptimalPacketSchedulingLearning} \centering
\begin{tabular}{l}
\hline
\textbf{Input:} Directed acyclic graph given the current context: $DAG(C_t)$ \\
\hline
\textbf{Initialization:} Set $DAG^1=DAG(C_t)$. \\
\textbf{For} $k=1,\ldots,|C_t|$ \\
~~~~$DU_k = \displaystyle\arg\max_{DU \in root(DAG_k)} \max_{0\leq y \leq x_{t,DU}} (1-\delta) \left[ q_{DU} \cdot y - \lambda^{i}_k(s^0) \cdot y \right]$ \\
~~~~~~~~~~~~~~~~~~~~~~~~~~~~~~~~~~~~~~~~~$+~\delta \cdot U_{DU}(C_t,x_{t,DU}-y,h_t)$ \\
~~~~$y^*_{t,DU_k} = \arg \max_{0\leq y \leq x_{t,DU}} (1-\delta) \left[ q_{DU} \cdot y - \lambda^{i}_k(s^0) \cdot y \right]$ \\
~~~~~~~~~~~~~~~~~~~~~~~~~~~~~~~~~~~~~~~~~$+~\delta \cdot U_{DU}(C_t,x_{t,DU}-y,h_t)$ \\
~~~~$DAG_{k+1} = DAG_k \setminus \{DU_k\}$ \\
\textbf{End For} \\
\hline
\end{tabular}
\end{table}

\begin{table*}
\renewcommand{\arraystretch}{1.3}
\caption{Relationship between the proposed and existing theoretical frameworks.} \label{table:DetailedComparison_Frameworks} \centering
\begin{tabular}{|c|c|c|}
\hline
Framework & Relationship & Representative works \\
\hline
Myopic & $\delta=0$ & \cite{vanderSchaarAndreopoulosHu}--\cite{JiHuangChiang} \\
\hline
Lyapunov optimization & User $i$'s post-decision value function $U^i(s^i)=u^i(s^i,a^i)+\|x^i-a^i+l^i\|_1^2-\|x^i\|_1^2$ & \cite{Neely} \\
\hline
MU-MDP & Lagrangian multiplier $\bm{\lambda}(s^0)=\bm{\lambda}$ for all $s^0$, and $\lambda^i=\lambda^j$ for all users $i,j$ & \cite{FuVDS_JSAC2010} \\
\hline
\end{tabular}
\end{table*}

\subsection{Detailed Comparisons with Existing Frameworks}
Since we have introduced our proposed framework, we can provide a detailed comparison with the existing theoretical framework. The comparison is
summarized in Table~\ref{table:DetailedComparison_Frameworks}.

First, the proposed framework reduces to the myopic optimization framework (repeated NUM) when we set the discount factor $\delta=0$. In this case,
the proposed solution reduces to the myopic solution.

Second, the Lyapunov optimization framework is closely related to the PDS-based online learning. In fact, it could be considered as a special case of
the PDS-based online learning when we set the post-decision value function as $U^i(s^i)=u^i(s^i,a^i)+\|x^i-a^i+l^i\|_1^2-\|x^i\|_1^2$, and choose the
action that maximizes the post-decision value function at run-time. However, the Lyapunov drift in the above post-decision value function depends
only on the total number of packets in the queue, but not on the delay deadlines, the dependency among packets, and the channel condition. In
contrast, in our PDS-based online learning, we explicitly consider the impact of the video traffic (i.e. delay deadlines and dependency) and the
channel condition when updating the normal and post-decision value functions.

Finally, the key difference between our proposed framework and the framework for MU-MDP \cite{FuVDS_JSAC2010} is how we penalize the constraints
$\bm{f}(s^0,\bm{a})$. In particular, the framework in \cite{FuVDS_JSAC2010}, if directly applied in our model, would define only one Lagrangian
multiplier for all the constraints under different states $s^0$. This in general leads to performance loss \cite{FuVDS_JSAC2010}. In contrast, we
define different Lagrangian multipliers to penalize the constraints under different states $s^0$, and enable the proposed framework to achieve the
optimality (which is indeed the case as have been proved in Theorem~\ref{theorem:Convergence}).

\section{Simulation Results}\label{sec:Simulation}
We consider a wireless network with multiple users streaming one of the following three video sequences, ``Foreman'' (CIF resolution, 30 Hz),
``Coastguard'' (CIF resolution, 30 Hz), and ``Mobile'' (CIF resolution, 30 Hz). We use the following system parameters by default and will explicitly
specify when changes are made. The video sequences Foreman and Coastguard are encoded at a bit rate of 512 kb/s, and Mobile is encoded at 1024 kb/s
due to its higher and more complicated video characteristics. Each GOP contains 16 frames and each encoded video frame can tolerate a delay of 266 ms
(namely the duration of 8 frames, or half the duration of a GOP). The length of one time slot is 10 ms, and the scheduling time window $W$ is 266~ms.
The energy consumption function is set as $\rho(h,\|a\|_1) = \sigma^2(2^{\|a\|_1}-1)/|h|^2$
\cite{FuVDS_CSVT2012}\cite{FuVDS_JSAC2010}\cite{BertsekasGallager}, where $|h|^2/\sigma^2=1.4$ ($\approx 1.5$~dB). We set the tradeoff parameter of
distortion reduction and energy consumption as $\beta=1$. The discount factor is $\delta=0.95$.

\subsection{Convergence of the PDS-based Learning Algorithm}
Before comparing against the other solutions, we show that the proposed PDS-based learning algorithm converges to the optimal solution (namely the
optimal value function is learned). The optimal solution is obtained by the proposed algorithm in Table~\ref{table:DecentralizedStrategy} assuming
the statistical knowledge of the system dynamics. We consider a scenario with 10 users streaming the video sequence Foreman. We compare long-term
video quality (i.e. Peak Signal-to-Noise Ratio (PSNR)) achieved by the optimal solution and that achieved by the PDS-based learning algorithm. For
illustrative purpose, we show the convergence of the learning algorithm in terms of long-term video quality only for two users in
Fig.~\ref{fig:Learning}.

\begin{figure}
\centering
\includegraphics[width =3.5in]{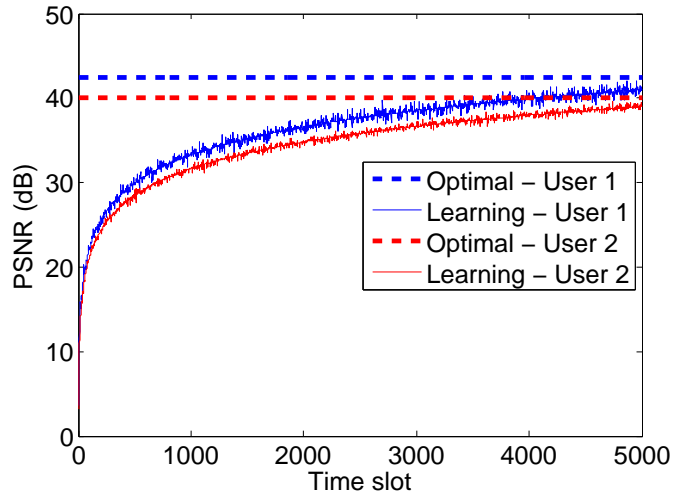}
\caption{Convergence of the PDS-based learning algorithm.} \label{fig:Learning}
\end{figure}

\begin{table}
\renewcommand{\arraystretch}{1.3}
\caption{Resource allocation and packet scheduling used in different solutions.} \label{table:DifferentSolutions} \centering
\begin{tabular}{|c|c|c|}
\hline
 & Resource allocation & Packet scheduling \\
\hline
Myopic \cite{vanderSchaarAndreopoulosHu}--\cite{JiHuangChiang} & myopic & earliest-deadline-first \cite{JiHuangChiang}  \\
\hline
Lyapunov \cite{Neely} & same as proposed (optimal) & based on Lyapunov drift \\
\hline
MU-MDP \cite{FuVDS_JSAC2010} & based on uniform price & same as proposed (optimal) \\
\hline
Proposed & as in Table~\ref{table:DecentralizedStrategy} (optimal) & as in Table~\ref{table:OptimalPacketScheduling} or \ref{table:OptimalPacketSchedulingLearning} (optimal) \\
\hline
\end{tabular}
\end{table}

\subsection{Comparison Against Existing Solutions}
Now we compare against the myopic solution (i.e. repeated NUM) \cite{vanderSchaarAndreopoulosHu}--\cite{JiHuangChiang}, the Lyapunov optimization
solution (adapted from \cite{Neely} for video transmission), and the MU-MDP solution \cite{FuVDS_JSAC2010}. In Table~\ref{table:DifferentSolutions},
we summarize the resource allocation and packet scheduling schemes adopted in the above solutions. Note that the Lyapunov solution is proposed for
the single-user problem without resource allocation. To fairly compare the optimal packet scheduling with the packet scheduling in the Lyapunov
solution that ignores video traffic, we adopt the proposed optimal resource allocation scheme in the Lyapunov solution. Similarly, since the proposed
and MU-MDP solutions use the same optimal packet scheduling, we can fairly compare the proposed resource allocation with the suboptimal foresighted
resource allocation based on uniform price.

\subsubsection{Resource Allocation and Packet Scheduling Decisions Made by Different Solutions}
Before comparing the performance of different solutions, we illustrate the resource allocation and packet scheduling of different solutions under one
realization of a sequence of traffic states and channel states. From this illustration, we can better understand the differences of the resource
allocation and packet scheduling decisions made by different solutions and their impact on the video quality.

\begin{table*} \scriptsize
\renewcommand{\arraystretch}{1.0}
\caption{Resource Allocation and Packet Scheduling of The Myopic (Repeated NUM) Solution.} \label{table:IllustrationMyopic} \centering
\begin{tabular}{|c|c|c|c|c|c|}
\hline \multirow{2}{*}{Traffic states}
                    & I(40), P(10), B(10) & P(10), B(10), I(40) & I(40), P(10), B(10) & P(10), B(10), I(40) & I(40), P(10), B(10) \\
                    & I(40), P(10)        & P(10), P(10)        & P(0), I(40)         & I(20), P(10)        & P(10), P(10) \\
\hline
Channel state       & good                & bad                 & bad                 & bad                 & good \\
\hline
Resource allocation & (0.50, 0.50)        & (0.50, 0.50)        & (0.50, 0.50)        & (0.50, 0.50)        & (0.50, 0.50) \\
\hline \multirow{2}{*}{Packet scheduling}
                    & I(30), P(0), B(0)   & P(10), B(10), I(0)  & I(20), P(0), B(0)   & P(10), B(10), I(0)  & I(30), P(0), B(0) \\
                    & I(30), P(0)         & P(10), P(10)        & P(0), I(20)         & I(20), P(0)         & P(10), P(10) \\
\hline
Packet loss         & I(10), both users   &                     & I(20), user 1       &                     & I(10), user 1 \\
\hline
\end{tabular}
\end{table*}

\begin{table*} \scriptsize
\renewcommand{\arraystretch}{1.0}
\caption{Resource Allocation and Packet Scheduling of The Lyapunov Solution.} \label{table:IllustrationLyapunov} \centering
\begin{tabular}{|c|c|c|c|c|c|}
\hline \multirow{2}{*}{Traffic states}
                    & I(40), P(10), B(10) & P(10), B(10), I(40) & I(32), P(10), B(10) & P(10), B(10), I(40) & I(40), P(10), B(10) \\
                    & I(40), P(10)        & P(10), P(10)        & P(8), I(40)         & I(33), P(10)        & P(8), P(10) \\
\hline
Channel state       & good                & bad                 & bad                 & bad                 & good \\
\hline
Resource allocation & (0.50, 0.50)        & (0.72, 0.28)        & (0.63, 0.37)        & (0.35, 0.65)        & (0.64, 0.36) \\
\hline \multirow{2}{*}{Packet scheduling}
                    & I(30), P(0), B(0)   & P(10), B(10), I(8)  & I(25), P(0), B(0)   & P(10), B(3), I(0)   & I(36), P(0), B(0) \\
                    & I(30), P(0)         & P(10), P(2)         & P(8), I(7)          & I(27), P(0)         & P(8), P(10) \\
\hline
Packet loss         & I(10), both users   &                     & I(7), user 1        & I(6), user 2        & I(4), user 1 \\
\hline
\end{tabular}
\end{table*}

\begin{table*} \scriptsize
\renewcommand{\arraystretch}{1.0}
\caption{Resource Allocation and Packet Scheduling of The MU-MDP Solution.} \label{table:IllustrationMUMDP} \centering
\begin{tabular}{|c|c|c|c|c|c|}
\hline \multirow{2}{*}{Traffic states}
                    & I(40), P(10), B(10) & P(10), B(10), I(40) & I(40), P(10), B(10) & P(0), B(7), I(40) & I(24), P(10), B(10) \\
                    & I(40), P(10)        & P(10), P(10)        & P(6), I(40)         & I(22), P(10)      & P(8), P(10) \\
\hline
Channel state       & good                & bad                 & bad                 & bad               & good \\
\hline Resource allocation & \multirow{2}{*}{(0.50, 0.50)}
                                          & \multirow{2}{*}{(0.46, 0.22)}
                                                                & \multirow{2}{*}{(0.45, 0.42)}
                                                                                      & \multirow{2}{*}{(0.45, 0.35)}
                                                                                                          & \multirow{2}{*}{(0.54, 0.31)} \\
(before scaling)    & & & & & \\
\hline \multirow{2}{*}{Packet scheduling}
                    & I(30), P(0), B(0)   & P(0), B(0), I(26)   & I(12), P(10), B(0)  & P(0), B(7), I(16) & I(24), P(10), B(10) \\
                    & I(30), P(0)         & P(10), P(4)         & P(0), I(18)         & I(15), P(2)       & P(8), P(10) \\
\hline
Packet loss         & I(10), both users   &                     & P(6), user 2        & I(7), user 2      & \\
\hline
\end{tabular}
\end{table*}

\begin{table*} \scriptsize
\renewcommand{\arraystretch}{1.0}
\caption{Resource Allocation and Packet Scheduling of The Proposed Solution.} \label{table:IllustrationProposed} \centering
\begin{tabular}{|c|c|c|c|c|c|}
\hline \multirow{2}{*}{Traffic states}
                    & I(40), P(10), B(10) & P(10), B(10), I(40) & I(40), P(10), B(10) & P(0), B(7), I(40) & I(27), P(10), B(10) \\
                    & I(40), P(10)        & P(10), P(10)        & P(8), I(40)         & I(25), P(10)      & P(8), P(10) \\
\hline
Channel state       & good                & bad                 & bad                 & bad               & good \\
\hline
Resource allocation & (0.50, 0.50)        & (0.72, 0.28)        & (0.63, 0.37)        & (0.35, 0.65)      & (0.64, 0.36) \\
\hline \multirow{2}{*}{Packet scheduling}
                    & I(30), P(0), B(0)   & P(0), B(0), I(28)   & I(12), P(10), B(3)  & P(0), B(0), I(13) & I(27), P(9), B(0) \\
                    & I(30), P(0)         & P(10), P(2)         & P(0), I(15)         & I(25), P(2)       & P(8), P(10) \\
\hline
Packet loss         & I(10), both users   &                     & P(8), user 2        & B(7), user 2      & \\
\hline
\end{tabular}
\end{table*}

For illustrative purposes, we consider a simple but representative system with two users streaming video sequence Foreman. User 1 encodes the
sequence with GOP structure ``IPB'' and user 2 with GOP structure ``IPP''. We fix the numbers of packets in the I, P, B frames to be 40, 10, 10,
respectively. The scheduling time window is 2. The channel state has two values, ``good'' and ``bad''. When the channel state is good (bad), the two
users can transmit up to 60 (40) packets in total. We illustrate the resource allocation and packet scheduling of different solutions in
Tables~\ref{table:IllustrationMyopic}--\ref{table:IllustrationProposed}.

In the myopic solution, the resource allocation is static and determined based on the total distortion impact of the users. Since the users have very
similar GOP structures, the resource allocation is fixed at (0.50, 0.50) in all the time slots. The packet scheduling is EDF. From
Table~\ref{table:IllustrationMyopic}, we can see that severe packet loss occurs due to three consecutive bad channel states. In particular, in the
second bad channel state, user 1's I frame approaches its transmission deadline and needs to be transmitted, while user 2's I frame can wait to be
transmitted in the next time slot. However, due to the static resource allocation, it can only transmit 20 packets and has to discard 20 packets of
its I frame. The same problem occurs in the Lyapunov solution. However, since we adopt the proposed optimal resource allocation, only 7 packets of
the I frame are lost in the second bad channel state. Another problem of both myopic and Lyapunov solutions is that they schedule based on deadlines
without considering the distortion impact of the packets. For example, in the third bad channel state, user 1 has context (P, B, I) with the I frame
of the next GOP having the latest deadline. As a consequence, both solutions schedule the packets of P and B frames, leaving many packets of I frame
in the buffer. This results in the loss of 10 packets and 4 packets of the I frame in the myopic and Lyapunov solutions, respectively.

We can see from Table~\ref{table:IllustrationProposed} that the proposed solution achieves much better performance due to foresighted resource
allocation and packet scheduling. First, notice that as in all the other solutions, the I frame experiences packet losses in the first time slot,
where there are 80 packets of I frame in total and the transmission capacity is 60 packets. This packet loss is inevitable since there is no room for
foresighted resource allocation and packet schedule ahead of the first time slot. The distinction of the proposed solution is evident in the bad
channel states. In the second bad channel state, it allocates much more resource (72\%) to user 1, although both users have a I frame to transmit.
This is because user 1's I frame reaches its transmission deadline at this time slot. Hence, such an allocation meets the urgent need for resource of
user 1 and avoids packet loss in its I frame. In addition, when the context is (P, B, I) with I frame of the next GOP having the latest deadline, the
proposed solution will still schedule some packets of I frames at the expense of losing some packets of P and B frames. Due to foresighted resource
allocation and packet scheduling, the proposed solution loses no packet of the I frames, which will result in much higher video quality than the
other solutions.

The MU-MDP solution illustrated in Table~\ref{table:IllustrationMUMDP} uses the same optimal packet scheduling as proposed and a suboptimal resource
allocation. As we have discussed earlier, the MU-MDP solution uses a uniform price, which needs to be set high enough to avoid the violation of the
resource constraints. Hence, the resource allocation is relatively conservative, as we can see from Table~\ref{table:IllustrationMUMDP}. Although the
initial resource allocation shown in the table will be scaled up by the BS as in \cite{FuVDS_JSAC2010} to ensure full resource utilization, the
proportion of the resource allocation is suboptimal. This suboptimal resource allocation results in the loss of 7 packets of the I frame in the third
bad channel state, because user 2 does not get enough resources.

\subsubsection{Heterogeneity of Video Traffic} We compare the proposed solution with existing solutions in terms of PSNR and energy consumption.
We change the tradeoff parameter $\beta$ from 1 to 30 to get different PSNR and energy consumption tradeoffs. We assume that there are three users
streaming the three different video sequences, respectively. The results are listed in Table~\ref{table:EnergyPSNR}. We can see that the proposed
solution can achieve  for all the users 7~dB PSNR improvement compared to the myopic solution, 5~dB PSNR improvement compared to the Lyapunov
solution, and 3~dB PSNR improvement compared to the MU-MDP solution with uniform price. Moreover, the proposed solution can achieve high PSNR for all
the three different video sequences, while existing solutions may result in low quality (e.g. less than 30~dB PSNR) especially for the more
challenging ``Coastguard'' and ``Mobile'' videos.

\begin{table*}
\renewcommand{\arraystretch}{1.3}
\caption{Comparisons of PSNR under different energy consumptions.} \label{table:EnergyPSNR} \centering
\begin{tabular}{|c|c|c|c|}
\hline
Energy consumption (Joule) & 0.08 & 0.10 & 0.15 \\
\hline
Myopic (repeated NUM) \cite{vanderSchaarAndreopoulosHu}--\cite{JiHuangChiang} & (30, 24, 24)~dB &  (40, 35, 25)~dB & (45, 45, 34)~dB  \\
\hline
Lyapunov \cite{Neely} & (31, 26, 26)~dB & (42, 38, 27)~dB & (47, 47, 37)~dB  \\
\hline
MU-MDP \cite{FuVDS_JSAC2010} & (34, 28, 27)~dB & (44, 40, 28)~dB & (50, 48, 38)~dB  \\
\hline
Proposed & (37, 32, 30)~dB & (46, 43, 32)~dB & (54, 51, 42)~dB  \\
\hline
\end{tabular}
\end{table*}

\subsubsection{Scaling With The Number of Users}
Now we compare the proposed solution with existing solutions in terms of the average PSNR across users when the number of users increases. The energy
consumption is fixed at 0.15 Joule. We assume that all the users stream the ``Foreman'' video sequence. Table~\ref{table:EnergyPSNR_NumberOfUser}
summarizes the results for different numbers of users. We can see that the performance gain of the proposed solution increases with the number of
users. In particular, the proposed solution can achieve high enough PSNR for high-quality video transmission (e.g. larger than 30 dB PSNR) even when
there are 20 users sharing the same resource. On the contrary, the myopic solution, the Lyapunov solution, and the MU-MDP solution may not achieve
high-quality video transmission when the number of users exceeds 9, 9, and 13, respectively. 

\begin{table*}
\renewcommand{\arraystretch}{1.3}
\caption{Comparisons of PSNR under different numbers of users.} \label{table:EnergyPSNR_NumberOfUser} \centering
\begin{tabular}{|c|c|c|c|c|c|c|c|c|}
\hline
Number of users & 3 & 5 & 7 & 9 & 11 & 13 & 15 & 20 \\
\hline
Myopic (repeated NUM) \cite{vanderSchaarAndreopoulosHu}--\cite{JiHuangChiang} & 45~dB & 41~dB & 36~dB & 31~dB & 27~dB & 24~dB & 20~dB & 15~dB  \\
\hline
Lyapunov \cite{Neely} & 47~dB & 45~dB & 42~dB & 36~dB & 29~dB & 26~dB & 23~dB & 20~dB  \\
\hline
MU-MDP \cite{FuVDS_JSAC2010} & 50~dB & 46~dB & 44~dB & 39~dB & 32~dB & 30~dB & 26~dB & 23~dB  \\
\hline
Proposed & 54~dB & 52~dB & 49~dB & 44~dB & 38~dB & 36~dB & 32~dB & 30~dB  \\
\hline
\end{tabular}
\end{table*}

\subsubsection{Different Channel Conditions and Encoding Rates}
We compare different solutions under different channel conditions and encoding rates. We consider a scenario with 10 users streaming the video
sequence Foreman. The channel condition (i.e. $|h|^2/\sigma^2$) varies from 1.0 dB to 2.0 dB, and the encoding rate varies from 256 kb/s to 1024
kb/s. We summarize the PSNR achieved by different solutions in Table~\ref{table:ChannelEncodingRate}. We can see that the proposed solution
outperforms the other solutions in various channel conditions and encoding rates considered, with an average improvement in PSNR of 10~dB over the
myopic solution, 7~dB over the Lyapunov solution, and 5~dB over the MU-MDP solution. We can also observe the importance of selecting the appropriate
bitstream bit-rate given the channel condition. When the channel is bad (e.g. 1.0 dB), we may want to switch to a lower bit-rate bitstream, while we
may prefer a higher bit-rate when the channel is good (e.g. 2.0 dB).

\begin{table*}\scriptsize
\renewcommand{\arraystretch}{1.0}
\caption{Comparisons of PSNR under different channel conditions and encoding rates.} \label{table:ChannelEncodingRate} \centering
\begin{tabular}{|c|c|c|c|c|c|c|c|c|c|}
\hline
(dB, kb/s) & (1.0, 256) & (1.0, 512) & (1.0, 1024) & (1.5, 256) & (1.5, 512) & (1.5, 1024) & (2.0, 256) & (2.0, 512) & (2.0, 1024) \\
\hline
Myopic \cite{vanderSchaarAndreopoulosHu}--\cite{JiHuangChiang} & 27~dB & 26~dB & 23~dB & 29~dB & 29~dB & 30~dB & 30~dB & 32~dB & 34~dB \\
\hline
Lyapunov \cite{Neely} & 31~dB & 30~dB & 28~dB & 32~dB & 32~dB & 33~dB & 34~dB & 34~dB & 35~dB \\
\hline
MU-MDP \cite{FuVDS_JSAC2010} & 34~dB & 33~dB & 31~dB & 35~dB & 36~dB & 36~dB & 38~dB & 39~dB & 40~dB \\
\hline
Proposed & 38~dB & 38~dB & 36~dB & 40~dB & 41~dB & 42~dB & 43~dB & 44~dB & 46~dB \\
\hline
\end{tabular}
\end{table*}

\subsection{Comparison of Resource Allocation Under Different Packet Scheduling Algorithms}
The proposed solution consists of two parts: packet scheduling and resource allocation. The proposed solution is optimal when each user adopts the
optimal packet scheduling. However, the optimal packet scheduling is achieved by solving a MDP, which may have high computational complexity (even
though we have decoupled the users' decision problems and thus have reduced the state space of each user's MDP). In practice, however, the users may
adopt simpler but suboptimal packet scheduling algorithms. Next, we demonstrate that such simpler packet scheduling algorithms can be easily combined
with our proposed resource allocation scheme, and show that when using simpler scheduling algorithms, our resource allocation scheme still
outperforms the resource allocation of the myopic and MU-MDP solutions with uniform price. Specifically, we consider the following three simpler
scheduling algorithms:
\begin{itemize}
\item EDF (Earliest Deadline First) scheduling: schedule all the packets of the frame with the earliest deadline first within its deadline.
\item FIFO (First-In First-Out) scheduling: schedule all the packets of the frame that comes in the buffer first within its deadline.
\item HDF (Highest Distortion First) scheduling: schedule all the packets of the frame with the highest distortion impact first within its deadline.
\end{itemize}

We consider a scenario with 10 users streaming the video sequence Foreman. The channel condition (i.e. $|h|^2/\sigma^2$) is 1.5 dB, and the encoding
rate is 512 kb/s. In Table~\ref{table:Scheduling}, we compare the proposed resource allocation with myopic resource allocation and the resource
allocation in the MU-MDP solution under different scheduling algorithms. We can see that our proposed resource allocation scheme outperforms the
other resource allocation schemes even when we use simpler scheduling algorithms. Moreover, the performance loss induced by using simpler scheduling
algorithms is smaller under our proposed solution (around 3 dB), compared to the myopic solution (around 5 dB).

\begin{table}
\renewcommand{\arraystretch}{1.3}
\caption{Comparisons of PSNR under simpler scheduling algorithms.} \label{table:Scheduling} \centering
\begin{tabular}{|c|c|c|c|c|}
\hline
Packet scheduling & EDF & FIFO & HDF & Optimal \\
\hline
Myopic \cite{vanderSchaarAndreopoulosHu}--\cite{JiHuangChiang} &  24~dB &  22~dB &  25~dB &  29~dB \\
\hline
MU-MDP \cite{FuVDS_JSAC2010} &  34~dB &  32~dB &  33~dB &  36~dB\\
\hline
Proposed &  38~dB &  37~dB &  38~dB &  41~dB\\
\hline
\end{tabular}
\end{table}

\section{Conclusion}\label{sec:Conclusion}
We propose the optimal foresighted resource allocation and packet scheduling for multi-user wireless video transmission. The proposed solution
achieves the optimal long-term video quality subject to each user's minimum video quality guarantee, by dynamically allocating resources among the
users and dynamically scheduling the users' packets while taking into account the dynamics of the video traffic and channel states. We develop a
low-complexity algorithm that can be implemented by the BS and the users in an informationally-decentralized manner and converges to the optimal
solution. Through extensive simulation, we demonstrate the performance gain of our proposed solution over existing solutions under a wide range of
deployment scenarios: different number of users, different channel conditions, different video encoding rates, and different (simpler but suboptimal)
packet scheduling algorithms. The simulation results show that our proposed solution can achieve significant improvements in PSNR of up to 7~dB
compared to myopic solutions and of up to 3~dB compared to state-of-the-art foresighted solutions.

\appendices
\section{Proof of Theorem~\ref{theorem:Convergence}}\label{proof:Convergence}

Due to limited space, we give a detailed proof sketch. The proof consists of three key steps. First, we prove that by penalizing the constraints
$f(s^0,a^1,\ldots,a^I)\leq0$ into the objective function, the decision problems of different entities can be decentralized. Hence, we can derive
optimal decentralized strategies for different entities under given Lagrangian multipliers. Then we prove that the update of Lagrangian multipliers
converges to the optimal ones under which there is no duality gap between the primal problem and the dual problem, due to the convexity assumptions
made on the cost functions. Finally, we validate the calculation of the prices.

First, suppose that there is a central controller that knows everything about the system. Then the optimal strategy to the design problem
\eqref{eqn:DesignProblem} should result in a value function $V^*$ that satisfies the following Bellman equation: for all $s^1,\ldots,s^I$, we have
\begin{eqnarray}\label{eqn:Bellman_Original}
&& V^*(s^1,\ldots,s^I) = \\
& \max_{a^1,\ldots,a^I}& \!\!\!\!(1-\delta) \sum_{i=1}^I u^i(s^i,a^i)  + \delta \cdot \mathbb{E} \left\{ V^*(s^{1\prime},\ldots,s^{I\prime}) \right\} \nonumber \\
& s.t. & \!\!\!\!f(s^0,a^1,\ldots,a^I) \leq 0. \nonumber
\end{eqnarray}

Defining a Lagrangian multiplier $\lambda(s^0)\in\mathbb{R}_+^N$ associated with the constraints $f(s^0,a^1,\ldots,a^I) \leq 0$, and penalizing the
constraints on the objective function, we get the following Bellman equation:
\begin{eqnarray}\label{eqn:Bellman_Lambda}
&& V^{\bm{\lambda}}(s^1,\ldots,s^I) = \\
& \max_{a^1,\ldots,a^I}& \!\!\!\!(1-\delta) \!\! \left[  \sum_{i=1}^I u^i(s^i,a^i) + \lambda(s^0) f(s^0,a^1,\ldots,a^I) \! \right] \nonumber \\
&& + \delta \cdot \mathbb{E} \left\{ V^*(s^{1\prime},\ldots,s^{I\prime}) \right\}. \nonumber
\end{eqnarray}

In the following lemma, we can prove that \eqref{eqn:Bellman_Lambda} can be decomposed.

\begin{lemma}\label{lemma:DecompositionValueFunction}
The optimal value function $V^{\lambda}$ that solves \eqref{eqn:Bellman_Lambda} can be decomposed as $V^{\lambda}(s^1,\ldots,s^I) = \sum_{i=1}^I
V^{\lambda,i}(s_i)$ for all $(s^1,\ldots,s^I)$, where $V^{\lambda,i}$ can be computed by user $i$ locally by solving
\begin{eqnarray}
V^{\lambda,i}(s^i) = \max_{a^i} ~ (1-\delta) \cdot \left[u^i(s^i,a^i)+\lambda f^i(s^0,a^i)\right] \\
+ \delta \cdot \sum_{s^{i\prime}} p^i(s^{i \prime}|s^i,a^i) V^{\lambda,i}(s^{i \prime}). \nonumber
\end{eqnarray}
\end{lemma}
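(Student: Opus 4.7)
The plan is to verify the decomposition by substituting the sum ansatz $V^{\lambda}(s^1,\ldots,s^I) = \sum_{i=1}^I V^{\lambda,i}(s^i)$ into the penalized Bellman equation \eqref{eqn:Bellman_Lambda} and showing that every piece of the right-hand side decouples across users. The decoupling rests on three separability properties of the model introduced in Section~\ref{sec:model}: (i) the per-user payoffs $u^i(s^i,a^i)$ depend only on user $i$'s own state and action, so $\sum_i u^i$ is additive; (ii) the resource-constraint function factorizes as $f(s^0,a^1,\ldots,a^I) = \sum_i f^i(s^0,a^i) - B$ with $f^i(s^0,a^i) = \|a^i\|_1 b / r^i(h^i)$, so the Lagrangian penalty $\lambda(s^0) f$ splits into a sum of per-user terms plus a term $-\lambda(s^0) B$ that is constant in the actions; and (iii) the joint transition kernel factorizes as $\prod_i p^i(s^{i\prime}|s^i,a^i)$ because each user's traffic and channel dynamics evolve independently of the others conditional on its own action.

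First I would substitute the ansatz into the continuation term and use the product structure of the transition kernel to rewrite $\mathbb{E}\{V^{\lambda}(s^{1\prime},\ldots,s^{I\prime})\} = \sum_i \mathbb{E}_{p^i}\{V^{\lambda,i}(s^{i\prime})\}$. Combining this with the additive payoff and the additive penalty, the objective on the RHS of \eqref{eqn:Bellman_Lambda} becomes a sum of terms in which the $i$-th summand depends only on $a^i$ (and the commonly observed $s^0$). The joint maximization over $(a^1,\ldots,a^I)$ therefore reduces to $I$ independent per-user maximizations, each of which is exactly the single-user Bellman equation stated in the lemma. The residual state-dependent constant $-(1-\delta)\lambda(s^0) B$, which does not depend on any action, can be absorbed into the $V^{\lambda,i}$'s by distributing, for example, $-B/I$ among the per-user penalties, or equivalently by redefining each $V^{\lambda,i}$ up to a common state-dependent shift that does not affect optimal actions.

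The second step closes the argument via uniqueness. Under the standing boundedness assumptions on payoffs and penalties, the Bellman operator associated with \eqref{eqn:Bellman_Lambda} is a $\delta$-contraction on the space of bounded functions on $S^1\times\cdots\times S^I$, so it admits a unique fixed point. The calculation above shows that $\sum_i V^{\lambda,i}$, where each $V^{\lambda,i}$ is the unique fixed point of its corresponding single-user contraction (also a $\delta$-contraction on bounded functions of $s^i$), is itself a fixed point of the joint operator. Uniqueness then forces $V^{\lambda} = \sum_i V^{\lambda,i}$, which is the desired decomposition.

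The main obstacle I anticipate is not the algebra but the careful bookkeeping of the dependence on $s^0$. Although the lemma asserts that each $V^{\lambda,i}$ can be \emph{computed by user $i$ locally}, both $\lambda(s^0)$ and $f^i(s^0,a^i)$ a priori involve the full channel vector $s^0 = (h^1,\ldots,h^I)$. The resolution I would stress in the write-up is that $f^i$ depends on $s^0$ only through user $i$'s own channel $h^i$, which is part of $s^i$, and that once the BS broadcasts the user-specific price $\lambda^i(s^0) = \lambda(s^0)\cdot b / r^i(h^i)$, every quantity appearing in the per-user Bellman equation is either measurable from user $i$'s private state or is a scalar supplied by the BS. Making this point explicit is what justifies the transition from the abstract decomposition here to the decentralized algorithm of Table~\ref{table:DecentralizedStrategy}.
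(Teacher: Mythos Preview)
Your proposal is correct and follows essentially the same approach as the paper, which simply cites the independence of the users' state transitions together with the additive decomposition $f(s^0,a^1,\ldots,a^I)=\sum_i f^i(s^0,a^i)$ arising from the linearity of the resource constraint in the actions. Your write-up is in fact considerably more complete than the paper's two-sentence sketch: the explicit use of the contraction/uniqueness argument to pin down the fixed point and the careful discussion of how the $s^0$-dependence collapses to the user-specific price $\lambda^i(s^0)$ are both points the paper leaves implicit.
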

\begin{IEEEproof}
This can be proved by the independence of different entities' states and by the decomposition of the constraints $f(s^0,a^1,\ldots,a^I)$.
Specifically the constraints $f(s^0,a^1,\ldots,a^I)$ are linear with respect to the actions $a_1,\ldots,a_I$. As a result, we can decompose the
constraints as $f(s^0,a^1,\ldots,a^I) = \sum_{i=1}^I f^i(s^0,a^i)$.
\end{IEEEproof}

We have proved that by penalizing the constraint $f(s^0,a^1,\ldots,a^I)$ using Lagrangian multiplier $\lambda(s_0)$, different entities can compute
the optimal value function $V^{\lambda(s^0),i}$ distributively. Due to the convexity assumptions on the cost functions, we can show that the primal
problem \eqref{eqn:DesignProblem} is convex. Hence, there is no duality gap. In other words, at the optimal Lagrangian multipliers $\lambda^*(s^0)$,
the corresponding value function $V^{\lambda^*(s^0)}(s^1,\ldots,s^I) = \sum_{i=1}^I V^{\lambda^*(s^0),i}(s^i)$ is equal to the optimal value function
$V^*(s^1,\ldots,s^I)$ of the primal problem \eqref{eqn:Bellman_Original}. It is left to show that the update of Lagrangian multipliers converge to
the optimal ones. It is a well-known result in dynamic programming that $V^{\lambda(s^0)}$ is convex and piecewise linear in $\lambda(s^0)$, and that
the subgradient is $f(s^0,a^1,\ldots,a^I)$. Note that we use the sample mean of $a^i$, whose expectation is the true mean value of $a^i$. Since
$f(s^0,a^1,\ldots,a^I)$ is linear in $a^i$, the subgradient calculated based on the sample mean has the same mean value as the subgradient calculated
based on the true mean values. In other words, the update is a stochastic subgradient descent method. It is well-known that when the stepsize
$\Delta(k)=\frac{1}{k+1}$, the stochastic subgradient descent will converge to the optimal $\lambda^*$.

Finally, we can write the prices by taking the derivatives of the penalty terms. For user $i$, its penalty is $\lambda(s^0) \cdot f^i(s^0,a^i)$.
Hence, its price is
\begin{eqnarray}
\frac{\partial \lambda^i(s^0) \cdot f^i(s^0,a^i)}{\partial a^i} = \lambda(s_0) \cdot \frac{\partial f^i(s^0,a^i)}{\partial a^i} = \lambda(s_0) \cdot
\frac{b}{r^i(h^i)}. \nonumber
\end{eqnarray}

\end{document}